%%%%%%%%%%%%%%%%%%%%%%%%%%%%%%%%%%%%%%%%%%%%%%%%%%%%%%%%%%%%%%%%%%%%%%%%%%%%%%%%
\documentclass[journal, a4paper]{IEEEtran}

%% depending on your installation, you may wish to adjust the top
\addtolength{\topmargin}{9mm}

\makeatletter

\let\proof\@undefined
\let\endproof\@undefined
\makeatother

\usepackage{report}

%l\et\proof\IEEEproof
%\let\endproof\IEEEendproof

%\newcommand{\atilde}[1]{\tilde{#1}_{\alpha}\sbs}

%\newcommand{\twotilde}[1]{\tilde{#1}_{2}\sbs}
%\newcommand{\onetilde}[1]{\tilde{#1}_{1}\sbs}

 %Hypotheses
  %Hypotheses
 %Hypotheses
 %Hypotheses

% \newcommand{\vzero}{{\bm 0}}

\newcommand{\eps}{\varepsilon}
\def\Tset{\set{A}^{[n]}(p, \delta)}
\def\KLTset{\set{A}^{[n]}(p, q, \delta)}
\newcommand{\KLTsetd}[1]{\set{A}^{[n]}(p, q, #1)}
\def\pn{p^{[n]\hspace{-0.5pt}}}
\def\pnp{p^{[n]'\hspace{-0.5pt}}}
\def\pone{p^{[1]\hspace{-0.5pt}}}
\def\qn{q^{[n]\hspace{-0.5pt}}}
\def\qnp{q^{[n]'\hspace{-0.5pt}}}
\def\qone{q^{[1]\hspace{-0.5pt}}}
\def\Dn{D^{[n]\hspace{-0.5pt}}}
\def\Done{D^{[1]\hspace{-0.5pt}}}
\def\hn{h^{[n]\hspace{-0.5pt}}}
\def\hone{h^{[1]\hspace{-0.5pt}}}
\def\Hn{H^{[n]\hspace{-0.5pt}}}
\def\Hone{H^{[1]\hspace{-0.5pt}}}
\def\deltan{\delta^{[n]\hspace{-0.5pt}}}
\def\deltaone{\delta^{[1]\hspace{-0.5pt}}}
\def\epsilonn{\epsilon^{[n]\hspace{-0.5pt}}}
\def\epsilonone{\epsilon^{[1]\hspace{-0.5pt}}}
\def\epsn{\eps^{[n]\hspace{-0.5pt}}}

\def\taun{\tau^{[n]\hspace{-0.5pt}}}
\def\tauone{\tau^{[1]\hspace{-0.5pt}}}
\def\gamman{\gamma^{[n]\hspace{-0.5pt}}}

\def\An{\mat{A}^{[n]\hspace{-0.5pt}}}
\def\Bn{\mat{B}^{[n]\hspace{-0.5pt}}}
\def\Dn{\mat{D}^{[n]\hspace{-0.5pt}}}
\def\Cn{\mat{C}^{[n]\hspace{-0.5pt}}}
\def\Ani{\mat{A}^{-1 [n]\hspace{-0.5pt}}}
\def\Bni{\mat{B}^{-1 [n]\hspace{-0.5pt}}}

\def\Ln{\Lambda^{[n]\hspace{-0.5pt}}}
\def\Lnh{\hat{\Lambda}^{[n]\hspace{-0.5pt}}}
\def\Lnt{\tilde{\Lambda}^{[n]\hspace{-0.5pt}}}
\def\dtf{\! \left( e^{j 2 \pi f} \right)}

\newcommand{\vol}[1]{\mathsf{vol}\!\left(#1\right)}

%\ifCLASSOPTIONcompsoc
%% requires cite.sty v4.0 or later (November 2003)
%\usepackage[nocompress]{cite}
%\else
\usepackage{cite}
\usepackage{scalerel}

\def\bs{\mkern-12mu} % set amount of backspacing for lower limit of integration
%\fi

%%%%%%%%%%%%%%%%%%%%%%%%%%%%%%%%%%%%%%%%%%%%%%%%%%%%%%%%%%%%%%%%%%%%%%%%%%%%%%%%

% ------------------------------------------------------------
\begin{document}
\title{The Generalized Chernoff–Stein Lemma, Applications and Examples} 

% %%% Single author, or several authors with same affiliation:
\author{%
 \IEEEauthorblockN{Ibrahim Abou Faycal, Jihad Fahs, and Ibrahim Issa}
 \IEEEauthorblockA{Department of Electrical and Computer Engineering \\
                   American University of Beirut, Beirut, Lebanon \\
                   Emails: \{Ibrahim.Abou-Faycal, Jihad.Fahs, Ibrahim.Issa\}@aub.edu.lb}
}

  %\thanks{This work was supported by AUB's University
    %Research Board, and the Lebanese National Council for Scientific
    %Research (CNRS-L). Partial results were presented at the
    %2016 IEEE International Symposium on Information Theory.}}

\maketitle

%%%%%%%%%%%%%%%%%%%%%%%%%%%%%%%%%%%%%%%%%%%%%%%%%%%%%%%%%%%%%%%%%%%%%%%%%%%%%%%%

%\begin{abstract}
%
%\end{abstract}

%%%%%%%%%%%%%%%%%%%%%%%%%%%%%%%%%%%%%%%%%%%%%%%%%%%%%%%%%%%%%%%%%%%%%%%%%%%%%%%%
\section{Introduction}

In this manuscript we define the notion of "$\delta$-typicality" for both entropy and relative entropy, as well as a notion of $\epsilon$-goodness and provide an extension to Stein's lemma for continuous quantities as well as correlated setups. We apply the derived results on the Gaussian hypothesis testing problem where the observations are possibly correlated.

%%%%%%%%%%%%%%%%%%%%%%%%%%%%%%%%%%%%%%%%%%%%%%%%%%%%%%%%%%%%%%%%%%%%%%%%%%%%%%%%
\section{$\delta$-Typical Sets}
\label{sec:Typical}

We start by extending the classical framework to dependent random
variables.
 
Denote by $\pn(\cdot)$ a --family of-- probability distribution on
$\set{X}^n$. We use the label $p$ to refer to $\pn(\cdot)$ for the
dimension $n$ at hand. In simple terms, one may think of $p$ as a
label for the family of laws indexed by $n$
\begin{equation*}
  p: \left\{ \pn(\cdot) \right\}_n .
\end{equation*}

Depending on whether the laws are discrete or absolutely continuous,
we also define
\begin{equation*}
  \left.
    \begin{array}{c}
      \hn \eqdef h \! \left( \pn \right) \\
      \Hn \eqdef H \! \left( \pn \right)
    \end{array}
  \right\} = - \Ep{\pn}{\log \pn(\vX)}.
\end{equation*}

\smallskip

\begin{example*}[IID]
  The example corresponding to an IID scenario is given by
  \begin{equation*}
    \pn(\vx) = \prod_k \pone(x_k),
  \end{equation*}
  and in that scenario,
  \begin{equation*}
    \hn = \, n h \! \left( \pone \right) = n \hone \quad \text{or} \quad \Hn = \! n \Hone,
  \end{equation*}
  which is obtained through simple manipulations.
\end{example*}

For simplicity, we will use the differential entropy notation in what
follows albeit the statements will hold almost verbatim for the
discrete/entropy case. Whenever the statements need to be tailored to
the discrete case, we will explicitely do so.

Denote by $\delta$ a family $\left\{ \deltan \right\}_n$ of positive
scalars that vary with $n$ at a rate to be specified later.

\begin{definition}
  We define the {\bf $\delta$-typical\/} set $\Tset$
  \begin{multline*}
    \Tset = \\
    \left\{ \vx \in \set{X}^n \!: \hn - \deltan \leq - \log \pn(\vx) \leq \hn + \deltan \right\}.
  \end{multline*}
\end{definition}

Naturally, since we are interested in setups where $n$ is varying and
increasing to infinity, one can talk about the family of
$\delta$-typical sets.

\begin{example*}[IID]
  In the IID scenario,
  % if $D$ denotes $D \eqdef D \!\left(\pone \| \qone \right)$,
  \begin{align*}
    & \Tset \\
    = & \hspace{-1pt}\left\{ \hspace{-1pt}\vx \in \set{X}^n \!: n \hone - \deltan \hspace{-1pt} \leq - \hspace{-1pt}\sum_k \log \pone(x_k) \leq n \hone \hspace{-0.5pt} + \hspace{-0.5pt}\deltan \right\} \\
    = &  \hspace{-1pt} \left\{ \hspace{-1pt}\vx \in \set{X}^n \!: \hone - \frac{\deltan}{n} \hspace{-1pt}\leq - \frac{1}{n} \hspace{-1pt}\sum_k \log \pone(x_k) \leq \hone \hspace{-0.5pt} + \hspace{-0.5pt}\frac{\deltan}{n} \hspace{-1pt}\right\}
  \end{align*}
  which is the {\em typical set\/} defined and used by
  Cover~\cite{cover} with $\epsilon = \deltan / n$.
\end{example*}

Let $\epsilon$ be a family $\left\{ \epsilonn \right\}_n$ of positive
scalars that possibly vary with $n$. Mostly, the family $\epsilon$
will be non-increasing with $n$ at a rate of interest to be considered
later.
\begin{definition}
  \label{def:good}
  For a given family $\epsilon$, $\delta$ is said to be
  {\bf $\epsilon$-good\/} if
  \begin{equation}
    \pn \!\left( \Tset \right) \geq 1 - \epsilonn,
    \label{eq:deltaprop}
  \end{equation}
  for $n$ large enough.
\end{definition}

In layman terms, whenever $\delta$ is
$\epsilon$-good the probability --under $\pn$-- of the $\delta$-typical set is within $\epsilonn$ to one. Note that by taking large (see maximal) and small (see minimal) values, it is clear that $\epsilon$-good families always exist for any
$\epsilon$. However, ones with tighter bounds would be naturally
preferred. We consider hereafter $\epsilon$-good $\delta$ families and
attempt to find as tight bounds as possible.

\begin{example}[IID]
  In that scenario, since $\{ X_k \}_k$'s are IID according to
  $\pone$, by the Weak Law of Large Numbers (WLLN),
  \begin{align*}
    - \frac{1}{n} \sum_{k=1}^n \log \pone(X_k) \xrightarrow[n\to\infty]{p} \hone.
  \end{align*}
  Therefore, for the $\epsilon$-family $\epsilonn = \epsilonone$, for
  any $\xi > 0$, the family
  $\left\{ \deltan = n \, \deltaone = n \xi \right\}$ is
  $\epsilon$-good.

  One can identify possibly tighter good families: Denote by $\{Y_k\}$
  the random variables $Y_k = - \log \pone(X_k)$ which are IID
  whenever the $\{X_k\}$'s are. Assuming that the mean and variance of
  $Y_k$ exist, by the Central Limit Theorem (CLT)
  \begin{equation*}
    \frac{1}{\sqrt{n}} \sum_{k=1}^n \Bigl( Y_k - \hone \Bigr) \xrightarrow[n\to\infty]{d}
    \Normal{0}{\sigma^2},
  \end{equation*}
  where $\sigma^2 = \V{- \log \pone(X_1)}$. Therefore, since
  \begin{align*}
    & \, \pn \left( \Tset \right) \\
    & \, = \, \pn \left( - \deltan \leq \sum_{k=1}^n \Bigl( Y_k - \hone \Bigr) \leq \deltan \right) \\
    & \, = \, \pn \left( - \frac{\deltan}{\sqrt{n}} \leq  \frac{1}{\sqrt{n}} \sum_{k=1}^n \Bigl( Y_k
	   - \hone \Bigr) \leq \frac{\deltan}{\sqrt{n}} \right) \\
    & \, = \, 1 - 2 \, \Qfct{ \frac{\deltan}{\sigma \sqrt{n} } }
  \end{align*}
  and $\delta$ is $\epsilon$-good for families such that
  \begin{equation*}
    \epsilonn \geq 2 \, \Qfct{ \frac{\deltan}{\sigma \sqrt{n}} \, } \qquad \text{for } n \text{ large enough}.
  \end{equation*}
  Note that whenever the inequality is not asymptotically satisfied,
  $\delta$ will not be $\epsilon$-good.

  Since $\iQfct{\cdot\,}$ is decreasing, for any given $\epsilon$ the
  family $\delta$ is $\epsilon$-good if and only if
  \begin{equation}
    \deltan \geq \sigma \sqrt{n} \, \iQfct{ \frac{\epsilonn}{2} \, } \qquad \text{for } n \text{ large enough}.
    \label{eq:IIDeG}
  \end{equation}
  \label{ex:TTDG}
\end{example}

Through its definition, one can readily derive the following
properties of $\delta$-typical sets. Note that in the discrete case
the "volume" of the $\delta$-typical set refers to its cardinality.
\begin{property}
  \label{prop:Tset_1}
  The set $\Tset$ satisfies the following properties:
  \begin{itemize}
  \item[(a)] For $\vx \in \Tset$,
    \begin{align*}
      2^{- \left( \hn + \deltan \right)} & \leq \pn(\vx) \leq  2^{- \left( \hn - \deltan \right)}.
    \end{align*}
    
  \item[(b)] $\vol{\Tset} \leq 2^{\left( \hn + \deltan \right)}$.

  \item[(c)] For an $\epsilon$-good family $\delta$, whenever $n$ is large enough,
    \begin{equation*}
      \left( 1 - \epsilonn \right) \, 2^{\left( \hn - \deltan \right)} \leq \vol{\Tset}.
    \end{equation*}
  \end{itemize}
\end{property}

These properties parallel the "classical" ones (found in Cover~\cite{cover} for example) where the interpretation is that of "equipartition" on the typical set.

In the following lemma we consider another "set" that has a probability within
$\epsn$ to one under $\pn$. We show that its volume is "roughly" the
same.

\begin{lemma}
  \label{lm:OtherSets}
  Let $B^{[n]} \subset \set{X}^n$ be any subset of sequences
  $\vx \in \set{X}^n$ such that
  $\pn \! \left( B^{[n]} \right) \geq \left(1 - \epsn\right)$ when $n$ is
  large enough, and where $\epsn$ is a family of positive scalars less than one. For an $\epsilon$-good $\delta$,
  \begin{equation*}
    \vol{ B^{[n]} } \geq \left( 1 - \epsn - \epsilonn \right) \, 2^{\left( \hn - \deltan \right)},
  \end{equation*}
  whenever $n$ is large enough.
\end{lemma}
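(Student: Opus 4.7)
The plan is to follow the classical Cover-style argument: intersect $B^{[n]}$ with the $\delta$-typical set $\Tset$, lower bound the probability of this intersection, and then convert that probability bound into a volume (cardinality) bound using the pointwise upper bound on $\pn$ inside $\Tset$ provided by Property~\ref{prop:Tset_1}(a).

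First I would write
\begin{equation*}
  \pn\!\left( B^{[n]} \cap \Tset \right) \;=\; \pn\!\left(B^{[n]}\right) + \pn\!\left(\Tset\right) - \pn\!\left(B^{[n]} \cup \Tset\right).
\end{equation*}
Since $\pn(B^{[n]} \cup \Tset) \leq 1$, and since by hypothesis $\pn(B^{[n]}) \geq 1 - \epsn$ and by $\epsilon$-goodness of $\delta$ (Definition~\ref{def:good}) we have $\pn(\Tset) \geq 1 - \epsilonn$ for $n$ large enough, this yields
\begin{equation*}
  \pn\!\left( B^{[n]} \cap \Tset \right) \;\geq\; 1 - \epsn - \epsilonn,
\end{equation*}
for $n$ large enough. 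This is just the standard union-bound / inclusion-exclusion step; both hypotheses must hold simultaneously, which is the only place where $n$ needs to be taken large.

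Next, I would invoke Property~\ref{prop:Tset_1}(a): for every $\vx \in \Tset$ one has $\pn(\vx) \leq 2^{-(\hn - \deltan)}$. Consequently,
\begin{equation*}
  1 - \epsn - \epsilonn \;\leq\; \pn\!\left(B^{[n]} \cap \Tset\right) \;\leq\; \vol{B^{[n]} \cap \Tset} \cdot 2^{-(\hn - \deltan)},
\end{equation*}
where in the absolutely continuous case the middle inequality is the obvious integral estimate $\int_{B^{[n]} \cap \Tset} \pn(\vx)\,d\vx \leq \sup \pn \cdot \vol{B^{[n]} \cap \Tset}$, and in the discrete case it is the corresponding sum estimate. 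Rearranging and using $\vol{B^{[n]} \cap \Tset} \leq \vol{B^{[n]}}$ (by monotonicity of Lebesgue measure, resp.\ cardinality) yields the claimed bound.

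There is no real obstacle here; the only subtlety is ensuring that the "for $n$ large enough" qualifier is consistently handled, namely that we can take $n$ large enough that both $\pn(B^{[n]}) \geq 1 - \epsn$ and $\pn(\Tset) \geq 1 - \epsilonn$ hold at once, which is fine since each holds eventually. The argument is uniform between the discrete and continuous settings provided "volume" is interpreted as cardinality in the former and Lebesgue measure in the latter, which is exactly the convention stated just before Property~\ref{prop:Tset_1}.
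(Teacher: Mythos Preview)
Your proof is correct and follows essentially the same approach as the paper's own proof: intersect $B^{[n]}$ with the $\delta$-typical set, lower bound the probability of the intersection via inclusion--exclusion (the paper phrases this as the ``union bound''), then upper bound that probability by $2^{-(\hn-\deltan)}\vol{B^{[n]}\cap\Tset}$ using Property~\ref{prop:Tset_1}(a), and finally use $\vol{B^{[n]}\cap\Tset}\leq\vol{B^{[n]}}$.
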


\begin{proof}
  Consider $n$ to be large enough for the inequalities assumed in the
  statement and in Property~\ref{prop:Tset_1} to hold.

  Since $\pn \! \left( B^{[n]} \right) \geq \left(1 - \epsn\right)$ and
  $\pn \! \left( A^{[n]} \right) \geq \left( 1 - \epsilonn \right)$, by
  the union bound
  \begin{equation*}
    \pn \! \left( A^{[n]} \cap B^{[n]} \right) \geq \left( 1 - \epsilonn - \epsn \right).
  \end{equation*}
  
  On the other hand, 
  \begin{align*}
    % \vol{ B^{[n]} } & \geq \, \vol{ A^{[n]} \cap B^{[n]} } \\
    \pn \! \left( A^{[n]} \cap B^{[n]} \right) & = \int_{\vx \in A^{[n]} \cap B^{[n]} } \pn (\vx) \, d\vx \\
		      & \leq \int_{\vx \in A^{[n]} \cap B^{[n]} } 2^{- \left( \hn - \deltan \right)} \, d\vx \\
		      & = \, 2^{- \left( \hn - \deltan \right)} \, \vol{ A^{[n]} \cap B^{[n]} } \\
		      & \leq \, 2^{- \left( \hn - \deltan \right)} \, \vol{ B^{[n]} } \\
    \Rightarrow \quad \vol{ B^{[n]} } & \geq \left( 1 - \epsilonn - \epsn \right) \, 2^{\left( \hn - \deltan \right)}.
  \end{align*}

  Naturally, the derivations hold for the discrete case with
  summations replacing integrals and cardinality replacing volume.
\end{proof}

\subsection{The Dependent Gaussian Case}
\label{sec:typicalG}

We determine the $\delta$-typical set and some $\epsilon$-good
$\delta$ families for the {\em dependent\/} Gaussian case
$\vX \sim \Normal{\vzero}{\Lambda}$.

Using the expression of the Probability Density Function (PDF) and
evaluating in nats,
\begin{align*}
  \pn(\vx) & = \frac{1}{\sqrt{(2 \pi)^n  \text{det} \left( \Lambda \right)}} \, e^{-\frac{1}{2} \vxt \Lambda^{-1} \vx} \\
  h(\vX) & = \ln \sqrt{ \left( 2 \pi e \right)^n \text{det} \left( \Lambda \right) } \\
    & = \ln \sqrt{ \left( 2 \pi \right)^n \text{det} \left( \Lambda \right) } + \frac{n}{2} \\
  - \ln \pn(\vx) - h(\vX) & = \frac{1}{2} \, \vxt \Lambda^{-1} \vx - \frac{n}{2} \\
  \Tset & =
	  \left\{ \vx \in \set{X}^n \!: \left| \vxt \Lambda^{-1} \vx
	  - n \right| \leq 2 \, \deltan \right\},
\end{align*}
which is the set of vectors the norm of which is within $2 \, \deltan$
of $n$, where by norm we mean the one induced by the positive definite
matrix $\Lambda^{-1}$.

Examining the probability of the $\delta$-typical set,
\begin{align*}
  \pn \left( \Tset \right) & = p_{\vX} \sbs \left( n - 2 \, \deltan \leq \vXt \Lambda^{-1} \vX
			     \leq n + 2 \, \deltan \right) \\
  & = p_{\vY} \sbs \left( n - 2 \, \deltan \leq \| \vY \|^2 \leq n + 2 \, \deltan \right),
\end{align*}
where $\vY = \Lambda^{-1/2} \vX \sim \Normal{\vzero}{\mat{I}}$ and
where $\Lambda^{-1/2}$ is defined as follows: Diagonalize $\Lambda$ using orthogonal matrix $\mat{U}$ and denote the diagonal matrix by $\Delta$. Denoting by $\Delta^{1/2}$ the diagonal matrix of the square-roots of the eigenvalues and by $\Delta^{-1/2}$ its inverse,
\begin{equation} \begin{array}{ccc}
  \Lambda = \mat{U}^t \Delta \mat{U} \quad & \Rightarrow  & \Lambda^{-1} \! = \mat{U}^t \Delta^{-1} \mat{U} \\
  \Lambda^{1/2} = \mat{U}^t \Delta^{1/2} \mat{U} \quad & \Rightarrow & \Lambda^{-1/2} = \mat{U}^t \Delta^{-1/2} \mat{U}.
  % \begin{array}{rlcrl}
  % \Lambda \bs & = \mat{U}^t D \mat{U} & \qquad & \Lambda^{-1} \bs & = \mat{U}^t D^{-1} \mat{U} \\
  %   \Lambda^{-1/2} \bs & = D^{-1/2} \mat{U} & \qquad & \Lambda^{-1} \bs & = \Lambda^{-t/2} \Lambda^{-1/2}.
  % \end{array}
    \end{array}
    \label{eq:matSQRT}
\end{equation}

The weak law of large numbers establishes that for any $\xi > 0$, the
family $\left\{ \deltan = n \, \deltaone = n \, \xi \right\}$ is
$\epsilon$-good when the $\epsilonn$'s are constant.

One can determine tighter $\epsilon$-good families by using the CLT as
in Example~\ref{ex:TTDG}:
\begin{equation*}
  \frac{1}{\sqrt{n}} \sum_{k=1}^n \Bigl( Y_k^2 - 1 \Bigr) \xrightarrow[n\to\infty]{d}
  \Normal{0}{2},
\end{equation*}
and since
\begin{align*}
  \pn \left( \Tset \right) & = p_{\vY} \sbs \left( - 2 \, \deltan \leq \sum \left( Y_k^2 - 1 \right) \leq 2
			     \, \deltan \right),
\end{align*}
the following property is readily established.

\begin{property}
  The family $\delta$ is $\epsilon$-good if and only if
  \begin{equation*}
    \epsilonn \geq 2 \, \Qfct{ \frac{\sqrt{2} \, \deltan}{ \sqrt{n}} \, } \qquad \text{for } n \text{ large enough}.
  \end{equation*}
  
  Equivalently: given $\epsilon$, the family $\delta$ is
  $\epsilon$-good if and only if
  \begin{equation*}
    \deltan \geq \sqrt{\frac{n}{2}} \, \iQfct{ \frac{\epsilonn}{2} \, } \qquad \text{for } n \text{ large enough}.
  \end{equation*}
  \label{prop:epsdeltaCG}
\end{property}

Based on the above, one can infer the following:
\begin{property}
  If one desires to consider $\delta$ families that are
  $\epsilon$-good for all constants $\epsilonn$, then
  \begin{equation*}
    \frac{\deltan}{\sqrt{n}} \xrightarrow[n\to\infty]{} \infty
  \end{equation*}
  is $\epsilon$-good for all constants. Conversly, any family where
  $\deltan/\sqrt{n}$ does not grow to infinity is not $\epsilon$-good
  for all constants $\epsilonn$.
\end{property}

%%%%%%%%%%%%%%%%%%%%%%%%%%%%%%%%%%%%%%%%%%%%%%%%%%%%%%%%%%%%%%%%%%%%%%%%%%%%%%%%

\section{Relative Entropy $\delta$-Typical Sets}
\label{sec:KLtypical}

Denote by $\pn(\cdot)$ and $\qn(\cdot)$ two --families of--
probability distributions on $\set{X}^n$ and let $p$ and $q$ be labels for the
two respective families of laws indexed by $n$,
\begin{equation*}
  p: \left\{ \pn(\cdot) \right\}_n , \qquad q: \left\{ \qn(\cdot) \right\}_n.
\end{equation*}

We also define
\begin{equation*}
  \Dn \eqdef \KL{\pn}{\qn} = \Ep{\pn}{\log \frac{\pn(\vX)}{\qn(\vX)}}.
\end{equation*}

We consider both scenarios where, the laws $\left\{ \pn(\cdot)\right\}_n$
and $\left\{ \qn(\cdot)\right\}_n$ are
\begin{itemize}
\item[(i)] discrete, in which case
  \begin{equation*}
    \Dn = \sum_{\vx \in \set{X}^n} \pn(\vx) \log \frac{\pn(\vx)}{\qn(\vx)}.
  \end{equation*}
\item[(ii)] absolutely continuous, in which case
  \begin{equation*}
    \Dn = \int_{\vx \in \set{X}^n} \pn(\vx) \log \frac{\pn(\vx)}{\qn(\vx)} \, d\vx.
  \end{equation*}
\end{itemize}

\smallskip

\begin{example*}[IID]
  The example corresponding to an Independent and Identically
  Distributed (IID) scenario is given by
  \begin{equation*}
    \pn(\vx) = \prod_k \pone(x_k), \qquad \qn(\vx) = \prod_k \qone(x_k).
  \end{equation*}

  In that scenario,
  \begin{equation*}
    \Dn = \, n \KL{\pone}{\qone} = n \Done,
  \end{equation*}
  which is obtained through simple manipulations
  \begin{align*}
    \Dn = & \, \Ep{\pn}{\log \frac{\pn(\vX)}{\qn(\vX)}} = \Ep{\pn}{\sum_{k=1}^n \log \frac{\pone(X_k)}{\qone(X_k)}}  \\
    = & \, \sum_{k=1}^n \Ep{\pn}{\log \frac{\pone(X_k)}{\qone(X_k)}} = n \KL{\pone}{\qone}.
  \end{align*}
\end{example*}

Denote by $\delta$ a family $\left\{ \deltan \right\}_n$ of positive
scalars that vary with $n$ at a rate to be specified later.

\begin{definition}
  We define the relative-entropy {\bf $\delta$-typical\/} set $\KLTset$ to be
  \begin{multline*}
    \KLTset = \\
    \left\{ \vx \in \set{X}^n \!: \Dn - \deltan \leq \log
      \frac{\pn(\vx)}{\qn(\vx)} \leq \Dn + \deltan \right\}.
  \end{multline*}
\end{definition}

Naturally, since we are interested in setups where $n$ is varying and increasing to infinity, one can talk about the family of $\delta$-typical sets.

\begin{example*}[IID]
  In the IID scenario,
  % if $D$ denotes $D \eqdef D \!\left(\pone \| \qone \right)$,
  \begin{align*}
    & \KLTset \\
     = & \left\{ \vx \in \set{X}^n \!: n \Done - \deltan \leq \sum_k \log
	    \frac{\pone(x_k)}{\qone(x_k)} \leq n \Done + \deltan \right\} \\
    = & \left\{ \vx \in \set{X}^n \!: \Done - \frac{\deltan}{n} \leq \frac{1}{n} \sum_k \log
      \frac{\pone(x_k)}{\qone(x_k)} \leq \Done + \frac{\deltan}{n} \right\}
  \end{align*}
  which is the {\em relative entropy typical set\/} defined and used by Cover~\cite{cover} with $\epsilon = \deltan / n$.
\end{example*}

\begin{definition}
  \label{def:KLgood}
  For a given family $\epsilon$, $\delta$ is said to be
  {\bf $\epsilon$-good\/} if
  \begin{equation}
    \pn \!\left( \KLTset \right) \geq 1 - \epsilonn,
    \label{eq:KLdeltaprop}
  \end{equation}
  for $n$ large enough. 
\end{definition}

Said differently, $\delta$ is $\epsilon$-good if the probability
--under $p$-- of the $\delta$-typical is within $\epsilon$ from one as
$n$ goes to infinity.

\begin{example*}[IID]
  Consider the case where $\{ X_k \}_k$'s are IID according to
  $\pone$, and $\qn(\vx) = \prod \qone(x_k)$. By the WLLN,
  \begin{align*}
    \frac{1}{n} \sum_{k=1}^n \log \frac{\pone(X_k)}{\qone(X_k)} \xrightarrow[n\to\infty]{p} \Done,
  \end{align*}
  and hence, for any $\xi > 0$, the family
  $\left\{ \deltan = n \, \deltaone = n \xi \right\}$ is
  $\epsilon$-good for all $\epsilon$ families that are constant.

  To determine tighter good families as in Example~\ref{ex:TTDG}, denote by $\{Z_k\}$ the random variables $Z_k = \log \left( \pone(X_k) / \qone(X_k) \right)$ which are IID. Note that the mean --under $\pone$-- of $Z_k$'s is the relative entropy $\KL{\pone}{\qone}$. Assuming that this mean exists and that additionally $\sigma^2$ the variance of $Z_k$ exists, by the CLT,
  \begin{equation*}
    \frac{1}{\sqrt{n}} \sum_{k=1}^n \Bigl( Z_k - \Done \Bigr) \xrightarrow[n\to\infty]{d}
    \Normal{0}{\sigma^2}.
  \end{equation*}
  
  Therefore, since
  \begin{align*}
    & \pn \left( \KLTset \right) \\
    & \, = \, \pn \left( - \deltan \leq \sum_{k=1}^n \Bigl( Z_k - \Done \Bigr) \leq \deltan \right) \\
    & \, = \, \pn \left( - \frac{\deltan}{\sqrt{n}} \leq  \frac{1}{\sqrt{n}} \sum_{k=1}^n \Bigl( Z_k
	   - \Done \Bigr) \leq \frac{\deltan}{\sqrt{n}} \right) \\
    & \, = \, 1 - 2 \, \Qfct{ \frac{\deltan}{\sigma \sqrt{n} } }
  \end{align*}
  and $\delta$ is $\epsilon$-good for families such that
  \begin{equation*}
    \epsilonn \geq 2 \, \Qfct{ \frac{\deltan}{\sigma \sqrt{n}} \, } \qquad \text{for } n \text{ large enough}.
  \end{equation*}
  Note that whenever the inequality is not asymptotically satisfied,
  $\delta$ will not be $\epsilon$-good.

  Since $\iQfct{\cdot\,}$ is decreasing, for a given $\epsilon$ the
  family $\delta$ is $\epsilon$-good if and only if
  \begin{equation*}
    \deltan \geq \sigma \sqrt{n} \, \iQfct{ \frac{\epsilonn}{2} \, } \qquad \text{for } n \text{ large enough}.
  \end{equation*}

\end{example*}

Through its definition, one can readily derive the following
properties of $\delta$-typical sets.
\begin{property}
  \label{prop:KLTset_1}
  The set $\KLTset$ satisfies the following properties:
  \begin{itemize}
  \item[(a)] For $\vx \in \KLTset$,
    \begin{align*}
      %\Dn - \deltan \leq & \log \frac{\pn(\vx)}{\qn(\vx)} \leq \Dn + \deltan \\
      %2^{\Dn - \deltan} & \leq \frac{\pn(\vx)}{\qn(\vx)} \leq 2^{\Dn + \deltan} \\
      \pn(\vx) \, 2^{- \left( \Dn + \deltan \right)} & \leq \qn(\vx) \leq \pn(\vx) \, 2^{- \left( \Dn - \deltan \right)}.
    \end{align*}
    
  \item[(b)] $\qn \! \left( \KLTset \right) \leq 2^{- \left( \Dn - \deltan \right)}$.
  \end{itemize}
\end{property}

For $\epsilon$-good families~\eqref{eq:KLdeltaprop}, the
$\delta$-typical set satisfies the following additional properties.
\begin{property}
  \label{prop:KLTset}
  For an $\epsilon$-good family $\delta$, the set $\KLTset$ satisfies
  the following properties:
  \begin{itemize}
  \item[(a)] For $n$ is large enough,
    \begin{equation*}
      \pn \! \left( \KLTset \right) \geq \left( 1 - \epsilonn \right).
    \end{equation*}
    
  \item[(b)] For $n$ is large enough,
    \begin{equation}
      \qn \! \left( \KLTset \right) \geq \left(1 - \epsilonn \right) \, 2^{- \left( \Dn +
	  \deltan \right)}.
      \label{eq:qnTSet}
    \end{equation}
  \end{itemize}
\end{property}

These properties state in layman terms that whenever $\delta$ is
$\epsilon$-good, the probability --under $\pn$-- of the
$\delta$-typical set is $\epsilonn$ close to one and under $\qn$ it is
lower bounded by~\eqref{eq:qnTSet}. In the following lemma we consider
another "set" that has large probability under $\pn$ and we show
that its probability under $\qn$ is lower-bounded by a properly
adjusted lower bound.

\begin{lemma}
  \label{lm:KLOtherSets}
  Let $B^{[n]} \subset \set{X}^n$ be any subset of sequences
  $\vx \in \set{X}^n$ such that
  $\pn \! \left( B^{[n]} \right) \geq \left( 1 - \epsn \right)$ when
  $n$ is large enough, and where $\epsn$ is a family of positive scalars less than one. For an $\epsilon$-good $\delta$ and $\qn$ such
  that $\Dn < \infty$,
  \[ \qn \! \left( B^{[n]} \right) \geq \left( 1 - \epsilonn - \epsn \right) \, 2^{-\left( \Dn +
      \deltan \right)},\]
  whenever $n$ is large enough.
\end{lemma}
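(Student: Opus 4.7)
The plan is to mimic the proof of Lemma~\ref{lm:OtherSets}, but using the relative-entropy typical set $\KLTset$ in place of $\Tset$, and exploiting Property~\ref{prop:KLTset_1}(a) to convert $\pn$-mass on the intersection into $\qn$-mass. Throughout I would take $n$ large enough that both the hypothesis $\pn(B^{[n]}) \geq 1 - \epsn$ and the $\epsilon$-goodness bound $\pn(\KLTset) \geq 1 - \epsilonn$ are in force.

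First I would apply the union bound to the two high-$\pn$-probability events $\KLTset$ and $B^{[n]}$, yielding
\[
\pn\!\left( \KLTset \cap B^{[n]} \right) \;\geq\; 1 - \epsilonn - \epsn.
\]
Next I would restrict attention to this intersection and invoke Property~\ref{prop:KLTset_1}(a), which on $\KLTset$ provides the pointwise lower bound $\qn(\vx) \geq \pn(\vx)\, 2^{-(\Dn + \deltan)}$. Integrating (or summing, in the discrete case) this inequality over $\KLTset \cap B^{[n]}$ gives
\[
\qn\!\left( \KLTset \cap B^{[n]} \right) \;\geq\; 2^{-(\Dn + \deltan)} \, \pn\!\left( \KLTset \cap B^{[n]} \right).
\]
Finally, monotonicity $\qn(B^{[n]}) \geq \qn(\KLTset \cap B^{[n]})$ combined with the union-bound estimate from the first step yields the claim.

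There is no real obstacle here: the only thing to watch is that Property~\ref{prop:KLTset_1}(a) is a pointwise statement on $\KLTset$, so one must intersect with $\KLTset$ before applying it — exactly the reason the union-bound step comes first. The assumption $\Dn < \infty$ is used implicitly to ensure that $\qn$ is not singular with respect to $\pn$ on the relevant sequences (so that $\pn(\vx)/\qn(\vx)$ is well-defined in the definition of $\KLTset$), which legitimizes the pointwise inequality on $\KLTset$. As in Lemma~\ref{lm:OtherSets}, the argument adapts verbatim to the discrete setting with sums replacing integrals.
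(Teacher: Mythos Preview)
Your proof is correct and follows essentially the same route as the paper's own proof: union bound on $\KLTset$ and $B^{[n]}$, then the pointwise lower bound $\qn(\vx) \geq \pn(\vx)\,2^{-(\Dn+\deltan)}$ from Property~\ref{prop:KLTset_1}(a) integrated over the intersection, followed by monotonicity of $\qn$. The paper's argument is identical in structure and detail.
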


\begin{proof}
  Consider $n$ to be large enough for the inequalities assumed in the
  statement and in Property~\ref{prop:KLTset} to hold.

  Since
  $\pn \! \left( B^{[n]} \right) \geq \left( 1 - \epsn \right)$
  and
  $\pn \! \left( A^{[n]} \right) \geq \left( 1 - \epsilonn \right)$,
  by the union bound
  \begin{equation*}
    \pn \! \left( A^{[n]} \cap B^{[n]} \right) \geq \left( 1 - \epsilonn - \epsn \right).
  \end{equation*}
  
  Therefore,
  \begin{align*}
    \qn \! \left( B^{[n]} \right)
    & \geq \, \qn \! \left( A^{[n]} \cap B^{[n]} \right) 
    = \!\! \int_{\vx \in A^{[n]} \cap B^{[n]} } \qn (\vx) \, d \vx \\
    & \geq \int_{\vx \in A^{[n]} \cap B^{[n]} } \pn(\vx) \, 2^{- \left( \Dn + \deltan \right)} \\
    & = \, 2^{- \left( \Dn + \deltan \right)} \pn \! \left( A^{[n]} \cap B^{[n]} \right) \\
    & \geq \, \left( 1 - \epsilonn - \epsn \right) \, 2^{- \left( \Dn + \deltan \right)}.
  \end{align*}

  Naturally. the derivations hold for the discrete case with integrals
  replaced by summations.
\end{proof}

\section{Generalized Chernoff–Stein Lemma}
\label{sec:Main}

In this section we state and prove an equivalent result to Stein's
Lemma for the continuous case. The stated result is also an extension
of Stein's Lemma to setups where variables are also possibly
correlated. We start by stating and proving a relevant lemma.

\begin{lemma}
  \label{lma:BNDS}
  Consider the binary hypothesis test: $X_1, X_2, \cdots, X_n$
  distributed according to $\pn$ or $\qn$, where $\Dn$ is {\em
    finite\/}. Let $\set{B}_{p} \subset \set{X}^n$ be the decision
  region for hypothesis $\pn$. Denote the probabilities of error by
  \begin{equation*}
    \alpha^{[n]} = \pn \! \left( \set{B}_p^{c} \right) \qquad \beta^{[n]} = \qn \! \left( \set{B}_p \right),
  \end{equation*}
  and for $0 < \tau < \frac{1}{2}$, we abuse notation and denote by
  $\tau$ the family of constants: $\taun = \tauone = \tau$ and
  define
  \begin{equation*}
    \beta^{[n]}_{\tau} = \min_{\stackrel{\set{B}_{p} \subset \set{X}^n}{\alpha^{[n]} < \tau}} \beta^{[n]}.
  \end{equation*}
  
  For any $\tau$-good family $\gamma$ and $\epsilon$-good family
  $\delta$,
  \begin{equation*}
    \left( 1 - \epsilonn - \tau \right) 2^{-\left( \Dn + \deltan \right)} \leq \beta^{[n]}_{\tau} \leq 2^{- \left( \Dn - \gamman \right)}.
    % \Dn - \deltan  < - \log \beta^{[n]}_{\tau} < \Dn + \deltan - \log (1 - 2 \epsilon ). 
  \end{equation*}
\end{lemma}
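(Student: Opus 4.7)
The plan is to establish the two inequalities separately, each by reducing to the relative-entropy typical-set machinery developed in Section~\ref{sec:KLtypical}. The upper bound will follow by exhibiting a concrete admissible decision region, while the lower bound will apply simultaneously to every admissible decision region via Lemma~\ref{lm:KLOtherSets}.

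\emph{Upper bound.} I would choose as decision region $\set{B}_p = \KLTsetd{\gamma}$, the relative-entropy $\delta$-typical set associated with the family $\gamma$. Because $\gamma$ is $\tau$-good, Definition~\ref{def:KLgood} gives $\pn\!\left(\KLTsetd{\gamma}\right) \geq 1 - \tau$ for $n$ large enough, so $\alpha^{[n]} \leq \tau$. Property~\ref{prop:KLTset_1}(b) then bounds the type-II error directly:
\[ \beta^{[n]} = \qn\!\left(\KLTsetd{\gamma}\right) \leq 2^{-(\Dn - \gamman)}, \]
and by minimality $\beta^{[n]}_\tau \leq 2^{-(\Dn - \gamman)}$.

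\emph{Lower bound.} For any $\set{B}_p$ with $\alpha^{[n]} < \tau$, equivalently $\pn(\set{B}_p) > 1 - \tau$, I would apply Lemma~\ref{lm:KLOtherSets} with $B^{[n]} = \set{B}_p$, $\epsn = \tau$, and the prescribed $\epsilon$-good family $\delta$. This yields, for $n$ large enough,
\[ \qn(\set{B}_p) \geq (1 - \epsilonn - \tau)\,2^{-(\Dn + \deltan)}. \]
Taking the infimum over all admissible decision regions $\set{B}_p$ then gives the claimed lower bound on $\beta^{[n]}_\tau$.

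The proof is thus essentially a direct assembly of earlier results, and I do not expect any deep obstacle. The one delicate point I foresee is the mismatch between the strict inequality $\alpha^{[n]} < \tau$ in the definition of $\beta^{[n]}_\tau$ and the non-strict bound $\alpha^{[n]} \leq \tau$ that $\tau$-goodness of $\gamma$ produces for the upper-bound candidate. I expect this to be handled either by passing to an auxiliary family $\gamma'$ marginally tighter than $\gamma$, so that the typical-set probability sits strictly above $1 - \tau$, or by noting that in the settings of interest (such as those handled via the CLT as in the earlier IID examples) the typical-set probability exceeds $1 - \tau$ strictly for all large $n$. This cosmetic adjustment aside, both bounds follow immediately from Property~\ref{prop:KLTset_1} and Lemma~\ref{lm:KLOtherSets}.
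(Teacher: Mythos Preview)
Your proposal is correct and follows essentially the same path as the paper: the upper bound via the choice $\set{B}_p=\KLTsetd{\gamma}$ together with Property~\ref{prop:KLTset_1}(b), and the lower bound via Lemma~\ref{lm:KLOtherSets} applied to an arbitrary admissible region. The paper's proof does not even flag the strict-versus-non-strict issue you raise for the upper bound; your observation there is an extra bit of care rather than a divergence in method.
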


\begin{proof}
  The proof follows the same path as in the discrete and independent
  case studied in Cover~\cite{cover}. 

  By considering $\tau$-good $\gamma$ families, we derive first the
  upperbound by considering the detector
  $\set{B}_p = \KLTsetd{\gamma}$. Since $\gamma$ is $\tau$-good,
  Property~\ref{prop:KLTset} implies that for $n$ large enough,
  \begin{equation*}
    \alpha^{[n]} = \pn \! \left( \KLTsetd{\gamma}^{c} \right) \leq \taun = \tau,
  \end{equation*}
  which means that the device is feasible. Additionally,
  Property~\ref{prop:KLTset_1} implies,
  \begin{align*}
    \beta^{[n]} & = \qn \! \left( \KLTsetd{\gamma} \right) \leq 2^{- \left( \Dn - \gamman \right)} \\
    \Leftrightarrow \quad - \log \beta^{[n]} & \geq \left( \Dn - \gamman \right).
  \end{align*}
   
  An optimal device will outperform the above device and hence:
  \begin{equation}
    - \log \beta^{[n]}_{\tau}  \geq \left( \Dn - \gamman \right).
    \label{eq:lowerBnd}
  \end{equation}
  
  On the other hand, {\em for any device\/} such that
  $\pn \! \left( \set{B}_p^{c} \right) < \tau$,
  $\pn \! \left( \set{B}_p \right) > (1 - \tau)$ and given a $\delta$ that is $\epsilon$-good, Lemma~\ref{lm:KLOtherSets} applies:
  \begin{equation*}
    \beta^{[n]} = \qn \! \left( \set{B}_p \right) \geq \left( 1 - \epsilonn - \tau \right) \, 2^{- \left( \Dn + \deltan \right)}.
  \end{equation*}
  
  Hence
  \begin{align*}
    \log \beta^{[n]} & \geq \log \left( 1 - \epsilonn - \tau \right) - \left( \Dn + \deltan \right) \\
    - \log \beta^{[n]} & \leq \Dn - \log \left( 1 - \epsilonn - \tau \right) + \deltan.
  \end{align*}

  This applies to all devices, including an optimal one. Hence,
  \begin{equation}
    - \log \beta^{[n]}_{\tau}  \leq \Dn - \log \left( 1 - \epsilonn - \tau \right) + \deltan.
    \label{eq:upperBnd}
  \end{equation}
   
\end{proof}

Whenever we can find good $\gamma$ and $\delta$ families with
additional structure, one can readily prove additional properites that
we state in the following (comprehensive) theorem.

% HERE ... Needs fixing

\begin{theorem}[Generalized Chernoff–Stein Lemma]
  \label{thm:GCSL}
  Consider the binary hypothesis test: $X_1, X_2, \cdots, X_n$
  distributed according to $\pn$ or $\qn$, where $\Dn$ is {\em
    finite\/}. Let $\set{B}_{p} \subset \set{X}^n$ be the decision
  region for hypothesis $\pn$ and denote by
  \begin{equation*}
    \begin{array}{ll}
      \displaystyle \alpha^{[n]} = \pn \! \left( \set{B}_p^{c} \right) \qquad & \beta^{[n]} = \qn \! \left( \set{B}_p \right) \vspace{7pt} \\
      \displaystyle \beta^{[n]}_{\tau} = \min_{\stackrel{\set{B}_{p} \subset \set{X}^n}{\alpha^{[n]} < \tau}} \beta^{[n]} & 0 < \tau < \frac{1}{2},
    \end{array}
  \end{equation*}

  If
  \begin{itemize}
  \item $\Dn$ {\em grows with $n$\/} and
  \item For any $\xi > 0$ there exists a $\tau$-good $\delta$ such that
    $\deltan < \xi \Dn$ for $n$ large enough,
  \end{itemize}
  then
  \begin{equation*}
    - \log \beta^{[n]}_{\tau} = \Dn + o \left( \Dn \right).
  \end{equation*}
\end{theorem}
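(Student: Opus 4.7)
The plan is to apply Lemma~\ref{lma:BNDS} to a well-chosen pair of families and then let the parameter $\xi$ from the hypothesis shrink to zero. Taking $-\log$ of the sandwich in Lemma~\ref{lma:BNDS} yields
\[
\Dn - \gamman \;\leq\; -\log \beta^{[n]}_{\tau} \;\leq\; \Dn + \deltan - \log\!\left(1 - \epsilonn - \tau\right),
\]
so the task reduces to choosing a $\tau$-good $\gamma$ and an $\epsilon$-good $\delta$ (for some $\epsilon$ with $\epsilonn + \tau$ bounded away from $1$) for which both $\gamman$ and $\deltan$ are $o(\Dn)$, while $-\log(1-\epsilonn-\tau)$ stays bounded in $n$.

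The key observation is that any $\tau$-good family is automatically $\epsilon$-good for the constant choice $\epsilonn \equiv \tau$. Fix $\xi > 0$; by hypothesis there exists a $\tau$-good family $\delta$ with $\deltan < \xi \Dn$ for $n$ large enough. I would use this single family simultaneously as $\gamma$ and as $\delta$ in Lemma~\ref{lma:BNDS}, and set $\epsilonn \equiv \tau$. Since $\tau < 1/2$, the constant $-\log(1-2\tau)$ is finite, and the sandwich specializes to
\[
(1-\xi)\,\Dn \;\leq\; -\log \beta^{[n]}_{\tau} \;\leq\; (1+\xi)\,\Dn - \log(1-2\tau),
\]
valid for all sufficiently large $n$.

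To conclude, since $\Dn \to \infty$, the additive constant $-\log(1-2\tau)$ is $o(\Dn)$, and dividing through by $\Dn$ gives
\[
1-\xi \;\leq\; \liminf_{n\to\infty}\frac{-\log\beta^{[n]}_{\tau}}{\Dn} \;\leq\; \limsup_{n\to\infty}\frac{-\log\beta^{[n]}_{\tau}}{\Dn} \;\leq\; 1+\xi.
\]
Since $\xi > 0$ is arbitrary, both the $\liminf$ and $\limsup$ equal $1$, which is precisely the claim $-\log \beta^{[n]}_{\tau} = \Dn + o(\Dn)$. The main conceptual hurdle, such as it is, is recognizing that the single hypothesized family can serve simultaneously as the $\tau$-good template for the achievability (upper bound on $-\log\beta^{[n]}_{\tau}$) and as the $\epsilon$-good template for the converse (lower bound on $-\log\beta^{[n]}_{\tau}$); once that is noticed, the rest is routine arithmetic and the $\xi \to 0$ limit.
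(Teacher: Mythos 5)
Your proof is correct and follows essentially the same route as the paper: both apply Lemma~\ref{lma:BNDS} with the single hypothesized $\tau$-good family serving as both $\gamma$ and $\delta$ (and $\epsilonn\equiv\tau$), yielding $\Dn-\deltan\le-\log\beta^{[n]}_{\tau}\le\Dn+\deltan-\log(1-2\tau)$, and then let $\xi\to 0$. Your write-up merely makes explicit the $\liminf$/$\limsup$ step that the paper leaves implicit.
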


% FIX INEQUALITIES / STRICT IN THE WHOLE MANUSCRIPT

Before proceeding, we note that if there exists a good $\delta$ such
that $\deltan = o\left(\Dn\right)$, then the condition holds and
\begin{equation*}
  - \log \beta^{[n]}_{\tau} = \Dn + o \left( \Dn \right).
\end{equation*}

\begin{proof}
  The theorem is readily established by using Lemma~\ref{lma:BNDS},
  \begin{equation*}
    % (1 - 2 \epsilon ) \, 2^{-\left( \Dn + \deltan \right)} < \beta^{[n]}_{\tau} < 2^{- \left( \Dn - \deltan \right)},
    \Dn - \deltan  < - \log \beta^{[n]}_{\tau} < \Dn + \deltan - \log (1 - 2 \tau ).
  \end{equation*}

  Since $\Dn$ grows with $n$ and for every $\xi > 0$ one can chose
  $\delta$ such $\deltan \leq \xi \Dn$:
  \begin{equation*}
    - \log \beta^{[n]}  = \Dn + o \left( \Dn \right).
  \end{equation*}
\end{proof}

\begin{example*}[IID]
  In that scenario,
  \begin{equation*}
    \Dn = n \Done
  \end{equation*}
  and grows linearly with $n$. Since
  \begin{align*}
    &\frac{1}{n} \sum_{k=1}^n \log \frac{\pone(X_k)}{\qone(X_k)} \xrightarrow[n\to\infty]{p} \Done,
  \end{align*}
  then for any $\xi > 0$, the family
  $\left\{ \deltan = n \, \xi \right\}$ is good. The requirements
  are hence satisfied and
  \begin{equation*}
    - \frac{1}{n} \log \beta^{[n]} \xrightarrow[n\to\infty]{} \Done,
  \end{equation*}
  recovering the textbook~\cite{cover} result in the discrete IID
  setup.
\end{example*}

%% Make it a Lemma before the theorem

%%%%%%%%%%%%%%%%%%%%%%%%%%%%%%%%%%%%%%%%%%%%%%%%%%%%%%%%%%%%%%%%%%%%%%
\section{Testing the Correlated Gaussian Hypotheses}
\label{sec:GG}

We apply the results derived in~\ref{sec:Main} to the problem of binary hypothesis testing between two possibly correlated Gaussian laws. More specifically, we consider two discrete-time zero-mean (wide-sense) stationary Gaussian processes with absolutely summable auto-covariance functions and spectra denoted as:
\begin{align*}
  H_0: \,\, & K[n] = K_p[n] \quad \leftrightarrow \quad S_p \dtf \\
  H_1: \,\, & K[n] = K_q[n] \quad \leftrightarrow \quad S_q \dtf.
\end{align*}
Note that since the functions $K_p[\cdot]$ and $K_q[\cdot]$ are absolutely summable, the spectra are guaranteed to exist and will be clearly upper bounded. We impose the additional restriction that these spectra are lower-bounded by some positive scalars. This condition is for example satisfied whenever the spectra are positive and continuous on $[0,1]$.

We observe $n$ consecutive values of the process:
\begin{equation*}
    \vY = \begin{pmatrix}
    Y[1] \\ \vdots \\ Y[n]
    \end{pmatrix},
\end{equation*}
and our objective is to test for the two hypotheses.

Before proceeding, we note that under either hypethesis the auto-covariance matrix of $\vY$ is the Hermitian Toeplitz matrix derived form the auto-covariance function $K[\cdot]$:
\begin{equation*}
    \Lambda = \begin{pmatrix}
	K[0] & K[1] & \cdots & \cdots & K[n-2] & K[n-1] \\
	K[1] & K[0] & K[1] & \cdots & \cdots & K[n-2] \\
	\cdots & \ddots & \ddots & \ddots & \cdots & \cdots \\
	\cdots & \cdots & \ddots & \ddots & \ddots & \cdots \\
	 \cdots&\cdots&\cdots & K[1] & K[0] & K[1] \\
	K[n-1] & \cdots&\cdots&\cdots & K[1] & K[0]
    \end{pmatrix}
\end{equation*}

\subsection{Equivalent Formulation}
\label{sec:PS}

When observing $n$ consecutive values of the process, the binary hypothesis test at hand becomes one where the observed data is simply an $n$-dimensional multivariate Gaussian vector:
\begin{align}
  H_0: \, & \vY \sim \Normal{\vzero}{\Lambda_p} \nonumber \\
  H_1: \, & \vY \sim \Normal{\vzero}{\Lambda_q}. \label{eq:GGsetup}
\end{align}

By applying a --well chosen-- linear transformation of the form
$\mat{M} \vY$, the detection problem is equivalent to one with
diagonal covariance matrices.

Indeed, we first diagonalize matrix $\Lambda_q$ using orthogonal
matrix $\mat{U}_q$. Denoting the diagonal matrix by $\Delta_q$ and using the convention as in equation~\eqref{eq:matSQRT},
\begin{equation*}
  \begin{array}{rlcrl}
  %\Lambda_p \bs & = \mat{U}_p^t D_p \mat{U}_p & \qquad & \Lambda_p^{1/2} \bs & = \mat{U}_p^t D^{1/2}_p \mat{U}_p \\
   \Lambda_q \bs & = \mat{U}_q^t \Delta_q \mat{U}_q & \qquad & \Lambda_q^{1/2} \bs & = \mat{U}_q^t \Delta^{1/2}_q \mat{U}_q,
  \end{array}
\end{equation*}
consider the matrix $\Lambda_q^{-1/2} \Lambda_p \Lambda_q^{-t/2}$
which is Hermitian symmetric and positive definite. We diagonalize it using an orthogonal matrix $\mat{V}$ and denote the resulting diagonal matrix
$\mat{K}$:
\begin{align*}
  \Lambda_q^{-1/2} \Lambda_p \Lambda_q^{-t/2}  = \mat{V}^t \mat{K} \, \mat{V}.
\end{align*}

Finally, consider the transformation:
\begin{equation*}
  \vY' = \mat{V} \Lambda_q^{-1/2} \vY,
\end{equation*}
which under $H_0$ and $H_1$ has respective Gaussian laws $\pnp$ and
$\qnp$ with respective covariance matrices:
\begin{align}
  \Lambda_{p'} & =  \mat{V} \Lambda_q^{-1/2} \Lambda_p \Lambda_q^{-t/2} \, \mat{V}^t = \mat{K} \label{eq:matrixK} \\
  \Lambda_{q'} & =  \mat{V} \Lambda_q^{-1/2} \Lambda_q \Lambda_q^{-t/2} \, \mat{V}^t = \mat{I}. \nonumber
\end{align}

To recap, by applying the linear transformation
$\mat{V} \Lambda_q^{-1/2} \vY$, the detection problem becomes
\begin{align}
  H_0: \, & \vY \sim \Normal{\vzero}{\mat{K}} \nonumber\\
  H_1: \, & \vY \sim \Normal{\vzero}{\mat{I}}. \label{eq:GGsetuequiv}
\end{align}

Note here that $\KL{\pn}{\qn} = \KL{\pnp}{\qnp}$
and there is no loss in generality to assume in what follows that
\begin{equation*}
  \pn = \Normalf{\vx}{\vzero}{\mat{K}} \quad \& \quad \qn = \Normalf{\vx}{\vzero}{\mat{I}}.
\end{equation*}

\begin{note}
  By inspection of equation~\eqref{eq:matrixK} the elements of the diagonal matrix $\mat{K}$ are the eigenvalues of the matrix $\Lambda_q^{-1/2} \Lambda_p \Lambda_q^{-t/2}$.
  By considering an eigenvector associated with an eigenvalue and applying an appropriate matrix multiplication, one can readily show
  that the eigenvalues of the following matrices are identical:
  \begin{equation*}
    \Lambda_q^{-1/2} \Lambda_p \Lambda_q^{-t/2} \qquad \Lambda_q^{-1} \Lambda_p \qquad \Lambda_p \Lambda_q^{-1}
    \qquad \Lambda_p^{1/2} \Lambda_q^{-1} \Lambda_p^{t/2}.
  \end{equation*}  
  \label{note:Eig}
\end{note}

\subsection{Main Result}

We state and prove our main result in the form of Theorem~\ref{Gaussian-Gaussian}. 

\begin{theorem}
\label{Gaussian-Gaussian}
  Consider the binary hypothesis test where we observe an $n$-dimensional "segment" $\{X_1, X_2, \cdots, X_n\}$ of a zero-mean wide-sense stationary Gaussian process with an auto-covariance function $K_p[\cdot]$ or $K_q[\cdot]$. We assume that  $K_p[\cdot]$ and $K_q[\cdot]$ are absolutely summable and their Fourier transforms (i.e. spectra) $S_p \dtf$ and $S_q \dtf$ are upper-bounded and lower-bounded by a positive scalar. The observation $\vX$ is hence distributed according to $\pn \sim \Normal{\vzero}{\Lambda_p}$ or $\qn \sim \Normal{\vzero}{\Lambda_q}$. Let $\set{B}_{p} \subset \set{X}^n$ be the decision region for hypothesis $\pn$ and denote by
  \begin{equation*}
    \begin{array}{ll}
      \displaystyle \alpha^{[n]} = \pn \! \left( \set{B}_p^{c} \right) \qquad & \beta^{[n]} = \qn \! \left( \set{B}_p \right) \vspace{7pt} \\
      \displaystyle \beta^{[n]}_{\tau} = \min_{\stackrel{\set{B}_{p} \subset \set{X}^n}{\alpha^{[n]} < \tau}} \beta^{[n]} & 0 < \tau < \frac{1}{2},
    \end{array}
  \end{equation*}

  Then, 
  \begin{itemize}
  \item[1)] The relative entropy  $\Dn = \KL{\pn}{\qn}$ grows linearly  with $n$: 
    \begin{equation*}
    \frac{1}{n} \Dn \xrightarrow[n \to \infty]{} C_s,
    \end{equation*}
  where 
  \begin{equation*}
    C_s \eqdef \frac{1}{2} \!\int_0^1 \left(\frac{S_p \dtf}{S_q \dtf }  - \log \frac{S_p \dtf}{S_q \dtf }  - 1\right) df.
  \end{equation*}

  \item[2)] The type II error decays linearly with $n$ to the first order of the exponent: 
    \begin{equation*}
    - \log \beta^{[n]}_{\tau} = C_s \, n + o \left( n \right).
    \end{equation*}
  \end{itemize}
\end{theorem}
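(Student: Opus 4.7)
The plan is to reduce to the diagonal-versus-identity setting described in Section~\ref{sec:PS} and then appeal to the Generalized Chernoff--Stein Lemma (Theorem~\ref{thm:GCSL}). Accordingly, I work with $\pn = \Normalf{\vx}{\vzero}{\mat{K}}$ and $\qn = \Normalf{\vx}{\vzero}{\mat{I}}$, where $\mat{K}$ is diagonal with entries $\lambda_1^{[n]},\dots,\lambda_n^{[n]}$ equal to the eigenvalues of $\Lambda_q^{-1}\Lambda_p$ (Note~\ref{note:Eig}). The two-sided spectral bounds on $S_p$ and $S_q$ translate, via a Rayleigh-quotient argument, into uniform bounds $0 < m \le \lambda_i^{[n]} \le M < \infty$ independent of $i$ and $n$. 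This uniform boundedness is the key structural input for both parts.

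For part~1, substituting the two Gaussian densities into $\Dn = \Ep{\pn}{\log \pn(\vX)/\qn(\vX)}$ yields
\begin{equation*}
  \Dn \;=\; \tfrac{1}{2} \sum_{i=1}^n \bigl( \lambda_i^{[n]} - \log \lambda_i^{[n]} - 1 \bigr),
\end{equation*}
up to the usual base-of-logarithm factor. The convergence $\Dn/n \to C_s$ then follows from the Szegő-type first-order limit theorem for the eigenvalue empirical distribution of a product of Hermitian Toeplitz matrices: under the standing spectral hypotheses, for every continuous $F$ on $[m,M]$,
\begin{equation*}
  \frac{1}{n}\sum_{i=1}^n F\bigl(\lambda_i^{[n]}\bigr) \;\xrightarrow[n\to\infty]{}\; \int_0^1 F\!\left( \frac{S_p\dtf}{S_q\dtf} \right) df,
\end{equation*}
applied to the continuous function $F(x) = x - \log x - 1$.

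For part~2, I would verify the two hypotheses of Theorem~\ref{thm:GCSL}. Since $F(x) = x - \log x - 1 \ge 0$ with equality only at $x=1$, either $S_p = S_q$ almost everywhere (in which case the two hypotheses coincide and the conclusion is trivial) or $C_s > 0$ and $\Dn$ grows linearly in $n$. For the required $\tau$-good $\delta$ with $\deltan = o(\Dn)$, writing $\vZ = \mat{K}^{-1/2}\vX$ (standard Gaussian under $\pn$) a direct computation gives
\begin{equation*}
  \log \frac{\pn(\vX)}{\qn(\vX)} - \Dn \;=\; \tfrac{1}{2}\sum_{i=1}^n \bigl(\lambda_i^{[n]} - 1\bigr) \bigl(Z_i^2 - 1\bigr),
\end{equation*}
a centered sum of independent random variables with total variance $\tfrac{1}{2}\sum_i (\lambda_i^{[n]}-1)^2 = O(n)$ by the uniform eigenvalue bounds. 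Chebyshev's inequality then shows that $\deltan = c\sqrt{n}$ for $c$ sufficiently large (depending on $\tau$) defines a $\tau$-good family, and since $\Dn = \Theta(n)$, $\deltan = O(\sqrt{n}) = o(\Dn)$. Invoking Theorem~\ref{thm:GCSL} delivers $-\log \beta^{[n]}_{\tau} = \Dn + o(\Dn) = C_s n + o(n)$.

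The hardest step is justifying the Szegő-type limit, because $\Lambda_q^{-1}\Lambda_p$ is a product of Toeplitz matrices rather than itself Toeplitz; the uniform boundedness of $S_p$, $S_q$, and $S_q^{-1}$ is precisely what enables the classical Avram--Parter/Widom machinery to apply and what keeps $x \mapsto x - \log x - 1$ continuous and bounded on the eigenvalue support. Everything else (the closed-form KL, the Chebyshev control of the log-likelihood ratio, and the invocation of Theorem~\ref{thm:GCSL}) is routine once that convergence is in hand.
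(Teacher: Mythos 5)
Your proposal is correct and, for part 1, follows essentially the same route as the paper: the paper's Appendix proves exactly the Szeg\H{o}-type limit you invoke (via Gray's asymptotic equivalence of Toeplitz and circulant matrices, handling the product $\Lambda_p^{[n]}\Lambda_q^{-1\,[n]}$ through circulant approximants whose eigenvalues are the grid values of $S_p/S_q$), and then applies it to $F(x)=x-\log x-1$; your uniform eigenvalue bounds via the Rayleigh quotient match the paper's reliance on the two-sided spectral bounds. Where you genuinely diverge is part 2. The paper establishes the existence of a good $\delta$ family by proving a full central limit theorem for the triangular array $\psi_{n_k}=\frac{\kappa_k^{[n]}-1}{\sqrt{2}B_n}(Y_k^2-1)$: it invokes the Kolmogorov--Gnedenko general limit theorem and verifies the Lindeberg-type condition by explicit integration against the $\chi^2_1$ density, concluding that $\pn\!\left(\KLTset\right)\to 1-2\,Q\!\left(\sqrt{2}\,\delta^{[n]}/B_n\right)$ and hence obtaining a sharp necessary-and-sufficient threshold $\delta^{[n]}\geq \frac{B_n}{\sqrt{2}}\,Q^{-1}(\epsilon^{[n]}/2)$. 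You instead use Chebyshev on the same centered sum, whose variance $\frac{1}{2}\sum_i(\lambda_i^{[n]}-1)^2=O(n)$ by the uniform eigenvalue bounds, so $\delta^{[n]}=c\sqrt{n}$ with $c$ large (depending on $\tau$) is $\tau$-good and $o\left(D^{[n]}\right)$. This is substantially more elementary and entirely sufficient for Theorem~\ref{thm:GCSL}, which only needs existence of such a family; what it gives up is the exact asymptotic characterization of all $\epsilon$-good families (the constant in front of $\sqrt{n}$ and the if-and-only-if statement), which the paper's CLT machinery provides and which has independent interest beyond the theorem. Your explicit treatment of the degenerate case $C_s=0$ (where $K_p=K_q$ and the claim is trivial) is a small point the paper leaves implicit, since Theorem~\ref{thm:GCSL} formally requires $D^{[n]}$ to grow.
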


The proof of the first part of the theorem is deferred to Appendix~\ref{appAsymRE}. The second part follows directly from the results of Theorem~\ref{thm:GCSL} whenever one can establish the existence of a $\tau$-good family $\delta$ such that $\deltan < \xi \Dn$ for $n$ large enough, for any $\xi >0$. The rest of this section is dedicated to establishing the existence of such a family.

By the analysis in Section~\ref{sec:PS} an equivalent problem is defined by equation~\eqref{eq:GGsetuequiv} where $\mat{K} = \text{diag}\left( \kappa^{[n]}_1, \cdots ,\kappa^{[n]}_n \right)$ and as shown in Appendix~\ref{appD}, the relative entropy is 
\begin{align*}
 \Dn & = \KL{ \pn}{\qn} = \frac{1}{2} \text{tr}(\mat{K}) - \frac{1}{2} \log \text{det}(\mat{K}) - \frac{n}{2} \\
 & =  \frac{1}{2} \sum_{k=1}^n \kappa^{[n]}_1 - \frac{1}{2} \sum_{k=1}^n \log \kappa^{[n]}_1  - \frac{n}{2},
\end{align*}
which implies
\begin{align}
    & \pn \left( \KLTset \right) \nonumber\\
    & = \pn \left( - \deltan \leq \log
      \frac{\pn(\vX)}{\qn(\vX)} - \Dn \leq  \deltan\right) \nonumber\\ 
    & = \pn \left( - \deltan \leq \frac{1}{2}\sum_{k=1}^n \left[ 1 - \frac{1}{\kappa^{[n]}_k} \right] X_k^2  - \frac{1}{2} \sum_{k=1}^n \log \kappa^{[n]}_k \right. \nonumber\\
    & \hspace{180pt} \left.  - \Dn \leq \deltan \right) \nonumber \\
    & = \pn \left( - 2 \deltan \leq \sum_{k=1}^n \left[ 1 - \frac{1}{\kappa^{[n]}_k} \right] \left[ X_k^2 - \kappa^{[n]}_k \right] \leq 2\deltan \right) \nonumber \\
    & = \pn \left( - 2 \deltan \leq \sum_{k=1}^n \left[ \kappa^{[n]}_k - 1 \right] \! \left[ \frac{X_k^2}{\kappa^{[n]}_k} - 1 \right] \leq 2\deltan \right) 
    \label{eq:nonidsum}
  \end{align}

Before proceeding with the analysis of equation~\eqref{eq:nonidsum}, we define 
\begin{equation*}
    B_{n_{k}} = \left[ \kappa^{[n]}_k - 1 \right],\qquad  B_n^2 =\sum_{k = 1}^n B^2_{n_{k}},
\end{equation*}
and
\begin{align}
& \pn \left( - 2 \deltan \leq \sum_{k=1}^n \left[ \kappa^{[n]}_k - 1 \right] \! \left[ \frac{X_k^2}{\kappa^{[n]}_k} - 1 \right] \leq 2\deltan \right) \nonumber\\
& = \pn \left( \! - \frac{\sqrt{2} \deltan}{ B_n} \leq \frac{1}{\sqrt{2} B_n} \sum_{k=1}^n B_{n_k} \!\! \left[ \frac{X_k^2}{\kappa^{[n]}_k} - 1 \right] \leq \frac{\sqrt{2} \deltan}{B_n} \right) \nonumber\\
& = \pn \left( -  \frac{\sqrt{2} \deltan}{ B_n} \leq \sum_{k=1}^n \psi_{n_{k}} \leq \frac{\sqrt{2} \deltan}{B_n} \right), \label{eq:CLTgen}
\end{align}
where for $k = 1, \cdots, n$,
\begin{equation}
    \psi_{n_{k}} \eqdef \frac{B_{n_{k}}}{\sqrt{2}B_n} \left[ Y_k^2 - 1 \right], \quad Y_k = \frac{X_k}{\sqrt{\kappa^{[n]}_k}}.
    \label{eq:xiDef}
\end{equation}
Note that since for any $k$, $Y_k$ is a zero-mean Gaussian variable with variance equal to one, its square is central chi-squared distributed with one degree of freedom with mean and variance equal to 1 and 2 respectively.

We recall the following result on the general limit of sums~\cite{kolmo}.

\begin{theorem}
\label{th:kolmo}
In order that for suitably chosen constants $A_n$ the distribution functions of the sums 
\begin{equation*}
    \zeta_n = \psi_{n_{1}} + \psi_{n_{2}} + \cdots + \psi_{n_{k_n}} - A_n
\end{equation*}
of independent zero-mean random variables $\psi_{n_{1}}, \psi_{n_{2}}, \cdots, \psi_{n_{k_n}}$ such that 
$\sum_{k=1}^{k_n} \text{E}[\psi^2_{n_{k}}] = 1$
converge to the normal law 
\begin{equation}
\Phi(x) = \frac{1}{\sqrt{2 \pi}} \int_{-\infty}^{x} e^{- \frac{z^2}{2}} \, dz,\label{eq:CDFn}
\end{equation}
and the variables $\psi_{n_{1}}, \psi_{n_{2}}, \cdots, \psi_{n_{k_n}}$ be infinitesimal, it is necessary and sufficient that for every $\epsilon > 0$
\begin{equation}
\sum_{k=1}^{k_n} \int_{|x| > \epsilon} x^2 dF_{\psi_{n_k}} (x) \underset{n \rightarrow \infty}{\longrightarrow} 0 \label{eq:iffcond}
\end{equation}
  \end{theorem}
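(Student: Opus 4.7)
The plan is to prove the theorem by the method of characteristic functions, following the classical Lindeberg--Feller route. Write $\phi_{n,k}(t) = \text{E}[e^{it\psi_{n_k}}]$ and $\sigma_{n,k}^2 = \text{E}[\psi_{n_k}^2]$, so that the normalization $\sum_k \sigma_{n,k}^2 = 1$ is the unit-variance condition and, by independence, the characteristic function of $\sum_k \psi_{n_k}$ factors as $\prod_k \phi_{n,k}(t)$.

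For sufficiency, the first step is a second-order Taylor estimate on each summand. Splitting the integral defining $\phi_{n,k}(t) - 1 + t^2\sigma_{n,k}^2/2$ into the regions $|x|\leq \varepsilon$ and $|x|>\varepsilon$, and using $|e^{itx} - 1 - itx + t^2 x^2/2| \leq |t|^3 |x|^3 / 6$ on the first and the cruder bound $t^2 x^2$ on the second, one obtains
\[
\sum_k \Bigl| \phi_{n,k}(t) - 1 + \tfrac{t^2}{2}\sigma_{n,k}^2 \Bigr| \leq \tfrac{|t|^3 \varepsilon}{6} + t^2 \sum_k \!\int_{|x|>\varepsilon}\! x^2\, dF_{\psi_{n_k}}(x).
\]
By \eqref{eq:iffcond} the second term vanishes as $n \to \infty$, and since $\varepsilon$ is arbitrary the whole sum is $o(1)$. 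Infinitesimality of the $\psi_{n_k}$ (which \eqref{eq:iffcond} forces via $\sigma_{n,k}^2 \leq \varepsilon^2 + \int_{|x|>\varepsilon} x^2\, dF_{\psi_{n_k}}$) then makes $\max_k |1 - \phi_{n,k}(t)|$ uniformly small, so the expansion $\log(1+z) = z + O(|z|^2)$ is valid term-by-term, yielding $\sum_k \log \phi_{n,k}(t) = -t^2/2 + o(1)$. L\'evy's continuity theorem converts this into convergence in distribution to $\Phi$ in \eqref{eq:CDFn}, with the centering $A_n$ absorbing any residual bias needed to reconcile the zero-mean convention with the normal limit.

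For necessity, assume $\zeta_n \Rightarrow \Phi$ together with infinitesimality. One reverses the argument: L\'evy's continuity theorem gives $\prod_k \phi_{n,k}(t) \to e^{-t^2/2}$; infinitesimality permits the passage to $\sum_k (1-\phi_{n,k}(t)) \to t^2/2$; and decomposing $1-\phi_{n,k}(t)$ into a ``near zero'' contribution and a ``tail'' contribution identifies the canonical L\'evy measure of the limit, which for $\Phi$ carries no mass away from the origin, forcing the tail masses in \eqref{eq:iffcond} to vanish. I expect this necessity direction to be the main obstacle: without the infinitesimality hypothesis, triangular arrays of independent summands can converge to non-Gaussian infinitely divisible limits, and ruling these out genuinely requires either the general L\'evy--Khintchine classification of limits of row sums of infinitesimal triangular arrays (as developed in \cite{kolmo}) or a delicate direct estimate that separates the variance contribution of each $\psi_{n_k}$ from its tail mass. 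By contrast, the sufficiency part is essentially bookkeeping with Taylor's theorem once the Lindeberg bound \eqref{eq:iffcond} is in hand, and it is precisely the form that will be applied to the sum in \eqref{eq:CLTgen}.
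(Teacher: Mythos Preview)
The paper does not prove this theorem at all: its entire proof is the single line ``See~\cite[Thm.~3, p.~101]{kolmo}.'' The result is quoted as a classical fact from the Gnedenko--Kolmogorov theory of limit distributions for sums of independent random variables, and the paper merely \emph{applies} it to the specific summands $\psi_{n_k}$ arising in the Gaussian hypothesis-testing problem.

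Your proposal therefore goes well beyond what the paper does: you sketch the standard Lindeberg--Feller characteristic-function argument, which is indeed the method behind the cited reference. The sufficiency outline is correct and complete as a sketch (the Taylor-remainder splitting, the $\max_k \sigma_{n,k}^2 \to 0$ consequence of \eqref{eq:iffcond}, and the $\log(1+z)$ expansion are exactly the ingredients). For necessity you rightly flag that one needs either the general L\'evy--Khintchine classification of limits of infinitesimal row sums or a direct variance/tail separation; this is precisely why the paper is content to cite \cite{kolmo} rather than reproduce the argument. In short, there is nothing to compare against in the paper's own proof, and your sketch is a faithful outline of the classical proof one would find in the cited source.
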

\begin{proof}
See~\cite[Thm. 3, p. 101]{kolmo}
\end{proof}

In our setting, the $\{ \psi_{n_{k}} \}$'s defined in~\eqref{eq:xiDef} are independent shifted and scaled degree-one chi-squared variables with $\E{\psi_{n_{k}}} = 0$ and $\sum_{k=1}^{n} \E{\psi^2_{n_{k}}} = 1$.

% Also, the $\{ \xi_{n_{k}} \}$'s are infinitesimal, i.e.~\cite[p. 95]{kolmo}
% \begin{equation}
% \label{eq:infinitsmall}
% \forall \epsilon>0, \quad \left[ \sup_{1 \leq k \leq k_n} \text{Pr}\left\{ |\xi_{n_{k}}| \geq \epsilon \right\} \right] \xrightarrow[]{n \to \infty} 0.
% \end{equation}

Additionally, $B_{n_{k}} = \left[ \kappa^{[n]}_k - 1 \right]$ and by Theorem~\ref{th:contfct} and equation~(\ref{eq:CovMatProduct}) (which is a result of the fact that the spectra are upper-bounded and positively lower-bounded), they are uniformly bounded for all $k$ and $n$.
When it comes to $B_n = \sqrt{ \sum_{k = 1}^n \left[ \kappa^{[n]}_k - 1 \right]^2} $,
\begin{align}
  \frac{1}{\sqrt{n}} B_n & = \sqrt{ \frac{1}{n} \sum_{k = 1}^n \left[ \kappa^{[n]}_k - 1 \right]^2} \\
&   \xrightarrow[n \to \infty]{} \sqrt{ \int_0^1  \left[ \frac{S_p \dtf}{S_q \dtf } -1 \right]^2 df } \label{eq:bnsim},
\end{align}
due to uniform convergence and following similar arguments to what is done in Appendix~\ref{appDC}. Therefore, for every $\epsilon > 0$, one can choose $n$ large enough such that $\left|\frac{B_{n_{k}}}{\sqrt{2} B_n}\right| < \epsilon$, and each term in equation~\eqref{eq:iffcond} simplifies to: 
\begin{align}
 & \int_{|x| > \epsilon} x^2 dF_{\psi_{n_{k}}} (x) \nonumber\\%= \int_{x > \epsilon} x^2 dF_{\xi_{n_{k}}} (x) \nonumber\\
 &\, = \hspace{-1pt}\int_{\epsilon}^\infty \!\! \hspace{-0.6pt} x^2 \frac{1}{\sqrt{2 \pi}} \left|\frac{\sqrt{2} B_n}{B_{n_{k}}}\right| \left( \left|\frac{\sqrt{2}B_n}{B_{n_{k}}}\right|x + 1\right)^{-\frac{1}{2}} \!\! e^{-\frac{\left|\frac{\sqrt{2}B_n}{B_{n_{k}}}\right|x + 1}{2}}\, dx \label{eq:pdfchi}\\
 &\, \leq \frac{e^{-\frac{1}{2}}}{\sqrt{2 \pi}} \left|\frac{\sqrt{2}B_n}{B_{n_{k}}}\right|^{\frac{1}{2}} \int_{\epsilon}^\infty x^{\frac{3}{2}}  e^{-\left|\frac{B_n}{\sqrt{2} B_{n_{k}}}\right|x}\, dx \nonumber\\
 &\, = \frac{2 e^{-\frac{1}{2}}}{\sqrt{\pi}} \left(\frac{B_{n_{k}}}{B_n}\right)^2 \int_{\frac{\epsilon}{\sqrt{2}} \left|\frac{ B_n}{B_{n_{k}}}\right|}^\infty u^{\frac{3}{2}} e^{-u} \, du \label{eq:covux}\\
 & \, = \frac{2 e^{-\frac{1}{2}}}{\sqrt{\pi}} \left(\frac{B_{n_{k}}}{B_n}\right)^2 \Gamma\left(\frac{5}{2},\frac{\epsilon}{\sqrt{2}}\left|\frac{B_n}{B_{n_{k}}}\right|\right), \label{eq:uppincomgam}
 \end{align}
 where in order to write equation~\eqref{eq:pdfchi}, we used the expression of the PDF of $Y_k^2$: 
 \begin{equation*}
 p_{Y_{k}^2}(x) = \frac{1}{\sqrt{2 \pi}} x^{-\frac{1}{2}} e^{-\frac{x}{2}}.
 \end{equation*}
 Equation~\eqref{eq:covux} is justified by the change of variable $u = \left|\frac{B_n}{ \sqrt{2} B_{n_{k}}}\right| x$ and $\Gamma(\cdot,\cdot)$ in equation~\eqref{eq:uppincomgam} denotes the upper incomplete Gamma function. Therefore, the LHS of equation~\eqref{eq:iffcond} is
 \begin{align}
 &\sum_{k=1}^{n} \int_{|x| > \epsilon} x^2 dF_{\psi_{{n}_k}} (x) \nonumber \\
 &\, \leq \frac{2 e^{-\frac{1}{2}}} {\sqrt{\pi}} \sum_{k=1}^{n} \left(\frac{B_{n_{k}}}{B_n}\right)^2 \Gamma\left(\frac{5}{2},\frac{\epsilon}{\sqrt{2}}\left|\frac{B_n}{B_{n_{k}}}\right|\right) \nonumber\\
 &\, \leq \frac{4 e^{-\frac{1}{2}}} {\sqrt{\pi}} \left(\frac{\epsilon}{\sqrt{2}}\right)^{\frac{3}{2}} \sum_{k=1}^{n} \left|\frac{B_{n_{k}}}{B_n}\right|^{\frac{1}{2}} e^{-\frac{\epsilon}{\sqrt{2}}\left|\frac{B_n}{B_{n_{k}}}\right|}\label{eq:gamsim} \\
 &\, \leq \frac{4 e^{-\frac{1}{2}}} {\sqrt{\pi}} \left(\frac{\epsilon}{\sqrt{2}}\right)^{\frac{3}{2}} \left(\frac{M_s}{B_n}\right)^{\frac{1}{2}} e^{-\frac{\epsilon}{\sqrt{2}}\frac{B_n}{M_s}} \, n\label{eq:supspe} \\
 %&\, \rightarrow \frac{4 e^{-\frac{1}{2}} \epsilon^{\frac{3}{2}}}{\sqrt{2 \pi}}\lim_{n \rightarrow \infty} n \left|\frac{B_{n_{n}}}{B_n}\right|^{\frac{1}{2}}  e^{-\frac{\epsilon}{2}\left|\frac{B_n}{B_{n_{n}}}\right|}\label{eq:cesaro}\\
 &\, \overset{n \rightarrow \infty}{\longrightarrow}  0, \label{eq:lim0}
 \end{align}
where in order to write equation~\eqref{eq:gamsim} we used the fact that $\left|\frac{B_n}{B_{n_{k}}}\right|$ is large enough and that
\begin{eqnarray*}
\Gamma\left(\frac{5}{2},x\right) &\sim& x^{\frac{3}{2}} e^{-x}, \quad x \rightarrow \infty\\
&\leq& 2 \, x^{\frac{3}{2}} e^{-x}
\end{eqnarray*}
Equation~\eqref{eq:supspe} is justified by the fact that $\{B_{n_{k}}\}$s, $1 \leq k \leq n$ are uniformly bounded by $M_s$ %$|B_{n_{k}}| = \left|S(f)|_{f = \frac{k}{n}}\right| \leq M_s$ since $S(f)$ is bounded 
and equation~\eqref{eq:lim0} is valid since $B_n = \Theta\left(\sqrt{n}\right)$ as given by equation~\eqref{eq:bnsim}.  

In conclusion, equation~\eqref{eq:iffcond} is justified and Theorem~\ref{th:kolmo} applies with $A_n = 0$. Hence, 

 \begin{equation*}
    \sum_{k=1}^n \psi_{n_{k}}\xrightarrow[n\to\infty]{d}
    \Normal{0}{1},
  \end{equation*}
  and equation~\eqref{eq:CLTgen} gives
\begin{align}
 \pn \left( \KLTset \right) \rightarrow \, \, 1 - 2 \, \Qfct{ \frac{\sqrt{2} \deltan}{ B_n} }.\label{eq:thkolmo}
\end{align}  

Equation~\eqref{eq:thkolmo} implies $\delta$ is $\epsilon$-good if and only if
\begin{equation*}
    \deltan \geq  \frac{B_n}{\sqrt{2}} \, \iQfct{ \frac{\epsilonn}{2} \, } \qquad \text{for } n \text{ large enough}.
  \end{equation*}

Since $B_n = \Theta(\sqrt{n})$ then finding an $\epsilon$-good $\delta$ where $\deltan < \xi \Dn$ for any $\xi >0$ is feasible because $\Dn = \Theta(n)$ as given by the first part of the theorem. 
\bibliographystyle{IEEEtran}
\bibliography{Reference}
%%%%%%%%%%%%%%%%%%%%%%%%%%%%%%%%%%%%%%%%%%%%%%%%%%%%%%%%%%%%%%%%%%%%%%%%%%%%%%%%%%%%%%%%%%%%%%%%%%%%%%%%%%%%%%%%%%%%%%%%%%
%\clearpage
%%%%%%%%%%%%%%%%%%%%%%%%%%%%%%%%%%%%%%%%%%%%%%%%%%%%%%%%%%%%%%%%%%%%%%%%%%%%%%%%
\appendices

\section{Asymptotics of Functions of Eigenvalues of Covariance Matrices}
\label{app:FctEigCov}

Using properly constructed asymptotically equivalent matrices, we derive below a primary result in the form of Theorem~\ref{th:CovMatEV} where we determine the asymptotic behavior of continuous functions of the eigenvalues of an auto-covariance matrix.

\subsection{Asymptotically Equivalent Sequences of Matrices}
\label{appEqSeq}

We list the definition and some relevant results in the theory of asymptotically equivalent sequences of matrices. We refrain from providing the proofs that can be found in the review by Gray~\cite{Gray} for example.

\begin{definition}

Let $\left\{\An\right\}$ and $\left\{\Bn\right\}$ be two sequences of $n \times n$ matrices. The two sequences are said to be asymptotically equivalent if
\begin{enumerate}
\item $\An$ and $\Bn$ are uniformly bounded in the "strong" $l_{2}$ norm:
\begin{equation*}
   \twonorm{\An}, \twonorm{\Bn} \leq K < \infty.
\end{equation*}
% Check notation (L_infitnity)

\item $\An - \Bn = \Dn$ goes to zero in the "weak" norm\footnote{We will denote the weak norm of a matrix $\mat{A}$ by the symbol $\norm{\mat{A}}$ defined in the following manner: $\norm{\mat{A}} = \sqrt{
\frac{1}{n} \sum_k \sum_j |a_{k,j}|^2}$ } as $n \to \infty$:
\begin{equation*}
   \lim_{n \to \infty} \norm{ \An - \Bn } = \lim_{n \to \infty} \norm{\Dn} = 0.
\end{equation*}

Asymptotic equivalence of $\left\{\An\right\}$ and $\left\{\Bn\right\}$ will be abbreviated
$\An \sim \Bn$.
\end{enumerate}
\end{definition}

\begin{property} \mbox{} \\
\begin{enumerate}
\item If $\An \sim \Bn$ and if $\Bn \sim \Cn$, then $\An \sim \Cn$.
\item If $\An \sim \Bn$ and if $\Cn \sim \Dn$, then $\An \Cn \sim \Bn \Dn$.
\item If $\An \sim \Bn$ and $\norm{\Ani}, \norm{\Bni} \leq K < \infty$, i.e., $\Ani$ and $\Bni$ exist and are uniformly bounded by some constant independently of $n$, then $\Ani \sim \Bni$.
\end{enumerate}
\label{th:asymeq}
\end{property}

%\begin{lemma}
%Given two sequences of asymptotically equivalent matrices $\left\{\An\right\}$ and $\left\{\Bn\right\}$ with eigenvalues $\left\{\alpha^{[n]}_{k}\right\}$ and $\left\{\beta^{[n]}_{k}\right\}$ respectively, then
%\begin{equation*}
%	\lim_{n \to \infty} \frac{1}{n} \sum_{k=0}^{n-1} \alpha^{[n]}_{k}
%	= \lim_{n \to \infty} \frac{1}{n} \sum_{k=0}^{n-1} \beta^{[n]}_{k}.
%\end{equation*}
%\label{le:trace}
%\end{lemma}

\begin{theorem}

Let $\left\{\An\right\}$ and $\left\{\Bn\right\}$ be asymptotically equivalent sequences of matrices with real eigenvalues $\left\{ \alpha^{[n]}_{k} \right\}$ and $\left\{ \beta^{[n]}_{k} \right\}$ respectively. There exist finite numbers $m$ and $M$ such that
\begin{equation*}
  m \leq \alpha^{[n]}_{k}, \beta^{[n]}_{k} \leq M, \quad n = 1, 2, \cdots \quad k = 0,1,\cdots, n-1.
\end{equation*}

Let $F(x)$ be an arbitrary function continuous on $[m,M]$.  Then
\begin{equation*}
  \lim_{n \to \infty} \frac{1}{n} \sum_{k=0}^{n-1} F\left(\alpha^{[n]}_{k} \right) = \lim_{n \to \infty} \frac{1}{n} \sum_{k=0}^{n-1} F \left(\beta^{[n]}_{k} \right).
\end{equation*}
if either of the limits exists.

\label{th:contfct}
\end{theorem}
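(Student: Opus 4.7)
The plan is to follow the classical two-stage strategy found in Gray's review: first establish the identity for monomials (and hence polynomials) by exploiting the trace representation $\frac{1}{n}\sum_{k} F(\alpha_k^{[n]}) = \frac{1}{n}\mathrm{tr}\bigl(F(\mat{A}^{[n]})\bigr)$, and then extend to arbitrary continuous $F$ on $[m,M]$ via the Weierstrass approximation theorem using the uniform boundedness of the eigenvalues.

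Stage~1 (monomials). For $F(x)=x^j$ I would show that $\mat{A}^{[n]j} \sim \mat{B}^{[n]j}$ for every fixed $j\ge 1$. This is an induction using item~2 of Property~\ref{th:asymeq} and the fact that the strong $l_2$ norms of $\mat{A}^{[n]},\mat{B}^{[n]}$ are uniformly bounded (hence so are those of their powers). Once $\mat{A}^{[n]j} \sim \mat{B}^{[n]j}$ is in hand, write $\mat{G}^{[n]} \eqdef \mat{A}^{[n]j} - \mat{B}^{[n]j}$, which by definition satisfies $\norm{\mat{G}^{[n]}}\to 0$ in the weak norm. The key inequality is the trace--weak-norm bound
\begin{equation*}
 \tfrac{1}{n}\bigl|\mathrm{tr}(\mat{G}^{[n]})\bigr| \;\le\; \norm{\mat{G}^{[n]}},
\end{equation*}
which follows from Cauchy--Schwarz applied to $\mathrm{tr}(\mat{G}^{[n]})=\sum_i g^{[n]}_{ii}$ together with $\sum_i|g^{[n]}_{ii}|^2\le \sum_{i,j}|g^{[n]}_{ij}|^2 = n\norm{\mat{G}^{[n]}}^2$. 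Combining these gives $\frac{1}{n}\mathrm{tr}\bigl(\mat{A}^{[n]j}\bigr)-\frac{1}{n}\mathrm{tr}\bigl(\mat{B}^{[n]j}\bigr)\to 0$, i.e.\ $\frac{1}{n}\sum_k \alpha_k^{[n]j} - \frac{1}{n}\sum_k \beta_k^{[n]j}\to 0$. By linearity this extends to every polynomial $P$.

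Stage~2 (continuous $F$). The eigenvalues $\{\alpha_k^{[n]}\},\{\beta_k^{[n]}\}$ all lie in the compact interval $[m,M]$, so by Weierstrass, for any $\eps>0$ there exists a polynomial $P$ with $\sup_{x\in[m,M]}|F(x)-P(x)|<\eps$. Triangle inequality then yields
\begin{equation*}
 \biggl|\tfrac{1}{n}\!\sum_k F(\alpha_k^{[n]}) - \tfrac{1}{n}\!\sum_k F(\beta_k^{[n]})\biggr|
 \le 2\eps + \biggl|\tfrac{1}{n}\!\sum_k P(\alpha_k^{[n]}) - \tfrac{1}{n}\!\sum_k P(\beta_k^{[n]})\biggr|.
\end{equation*}
Taking $n\to\infty$ (using Stage~1 for the polynomial term) and then $\eps\to 0$ shows that the two sequences $\frac{1}{n}\sum_k F(\alpha_k^{[n]})$ and $\frac{1}{n}\sum_k F(\beta_k^{[n]})$ differ by an amount that vanishes. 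Consequently, if either sequence converges, so does the other, and to the same limit.

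The only genuine obstacle is the trace--weak-norm inequality that carries Stage~1, but it is a one-line Cauchy--Schwarz argument. Everything else is routine: induction via Property~\ref{th:asymeq} for powers, linearity for polynomials, and Weierstrass plus a $3\eps$-argument for the continuous extension. No new structural assumption beyond those stated in the definition of asymptotic equivalence is needed.
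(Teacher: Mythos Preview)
Your proposal is correct and is precisely the argument the paper has in mind: the paper does not give its own proof of Theorem~\ref{th:contfct} but defers to Gray's review, and the note immediately following the theorem makes clear that the intended derivation is exactly your two-stage route (monomials via traces of asymptotically equivalent powers, then Weierstrass approximation to pass to continuous $F$ on $[m,M]$). There is nothing to add or correct.
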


We note that the statement of the Theorem in~\cite{Gray} is for Hermitian matrices. However, a close examination of the derivation of the results readily indicates that the "Hermitian" requirement is simply made to guarantee that the eigenvalues are real-valued allowing the application of the Weierstrass approximation theorem for the uniform approximation of continuous functions on closed intervals with polynomials.

\subsection{Some Asymptotic Equivalence of Covariance Matrices}
\label{appCovM}

Let $\{X[\cdot]\}$ be a discrete-time wide-sense stationary process that is zero mean and with auto-covariance function $K[m]$. We assume throughout that $K[\cdot]$ is absolutely summable.

When considering an $n$-dimensional vector 
\begin{equation*}
    \vect{X} = \begin{pmatrix}
    X[1] \\ \vdots \\ X[n]
    \end{pmatrix},
\end{equation*}
its auto-covariance matrix is the Hermitian Toeplitz matrix:
\begin{equation*}
    \Ln = \begin{pmatrix}
	K[0] & K[1] & \cdots & \cdots & K[n-2] & K[n-1] \\
	K[1] & K[0] & K[1] & \cdots & \cdots & K[n-2] \\
	\cdots & \ddots & \ddots & \ddots & \cdots & \cdots \\
	\cdots & \cdots & \ddots & \ddots & \ddots & \cdots \\
	 \cdots&\cdots&\cdots & K[1] & K[0] & K[1] \\
	K[n-1] & \cdots&\cdots&\cdots & K[1] & K[0]
    \end{pmatrix}.
\end{equation*}

Consider the matrix
\begin{equation*}
    \Lnh = \begin{pmatrix}
	K[0] & \cdots & K[\hat{n}-1] & 0 & 0 \\
	K[1] & K[0] & \cdots & K[\hat{n}-1] & 0 \\
	\cdots & \ddots & \ddots & \cdots & \cdots \\
	K[\hat{n}-1] & \cdots & \ddots & \ddots & \cdots \\
	 0 &K[\hat{n}-1]&\cdots & K[0] & K[1] \\
	0 & 0 &\cdots  & K[1] & K[0]
    \end{pmatrix},
\end{equation*}
where $\hat{n} = \lfloor n/2 \rfloor + 1$. We argue that $\Ln \sim \Lnh$. Indeed,
\begin{itemize}
\item The matrices are uniformly bounded in the $l_{2}$ norm. Indeed, for any matrix $\mat{M}$
\begin{equation*}
    % \left\|\mat{K}\right\|_{\infty} \leq 
    \twonorm{\mat{M}} \leq \sqrt{ \norm{\mat{M}}_{1} \norm{\mat{M}}_{\infty} }.
\end{equation*}
Since the $l_{1}$ and $l_{\infty}$ norms are equal to the maximal absolute column and row sums respectively, whenever $\mat{M}$ is Hermitian, these $l_{1}$ and $l_{\infty}$ norms are equal and $\twonorm{\mat{M}} \leq \norm{\mat{M}}_{\infty}$. Therefore, 
\begin{equation*}
    \twonorm{ \Lnh } \leq \norm{ \Lnh }_{\infty} \quad \& \quad \norm{ \Ln }_{2} \leq \norm{ \Ln }_{\infty}.
\end{equation*}
Finally, the bound is readily obtained by using the absolute summability of the auto-covariance function:
\begin{align*}
    \norm{ \Lnh }_{\infty} & \leq \norm{ \Ln }_{\infty} = \max_{1 \leq i \leq n} \sum_{j=1}^n | \Lambda_{i,j} | \\
     & \leq 2 \sum_{j = 0}^{n-1} | K[j] | \leq 2 \sum_{j=0}^{\infty} | K[j] |.
\end{align*}

\item When it comes to the difference $\Ln - \Lnh$ is
\begin{align*}
    \begin{pmatrix}
	0 & 0 & \cdots & K[\hat{n}] & \cdots & K[n-1] \\
	0 & 0 & 0 & \cdots & \ddots & \cdots \\
	\cdots & \ddots & \ddots & \ddots & \cdots & K[\hat{n}] \\
	K[\hat{n}] & 0 & \ddots & \ddots & \ddots & \cdots \\
	 \cdots & \ddots & 0 & \cdots & \cdots & 0 \\
	K[n-1] & \cdots & K[\hat{n}] & 0 & \cdots  & 0 
    \end{pmatrix},
\end{align*}
and its weak norm is equal to 
\begin{align*}
  \norm{ \Ln - \Lnh }^2 & = 2 \, \frac{1}{n} \sum_{j = 1}^{n - \hat{n}} j |K[n-j]|^2 \\
  & \leq  2 \, \frac{1}{n} \sum_{j = 1}^{n - \hat{n}} (n - j) |K[n-j]|^2,
\end{align*}
because in the range $\{1, \cdots, n - \hat{n}\}$, $j \leq (n-j)$. Finally, since $\sum |K[j]|^2$ is convergent, and using Kronecker's lemma~\cite{ShiryBook}
\begin{align*}
    2 \, \frac{1}{n} \sum_{j = 1}^{n - \hat{n}} (n - j) |K[n-j]|^2 & =  2 \, \frac{1}{n} \sum_{j = \hat{n}}^{n - 1} j \, |K[j]|^2 \\
  & \leq 2 \, \frac{1}{n} \sum_{j = 1}^{n} j |K[j]|^2 \\
  & \xrightarrow[n \to \infty]{} 0
\end{align*}
\end{itemize}

Next we consider the Hermitian circulant matrices $\Lnt$ defined as
\begin{equation*}
    \begin{pmatrix}
	K[0] & \cdots & K[\hat{n}-1] & K[\hat{n}-2] & \cdots & K[1] \\
	K[1] & K[0] & \cdots & K[\hat{n}-1] & \cdots & K[2] \\
	\cdots & \ddots & \ddots & \cdots & \cdots & \cdots \\
	K[\hat{n}-1] & \cdots & \ddots & \ddots & \cdots & \cdots \\
    K[\hat{n}-2] & K[\hat{n}-1]&\cdots & \cdots & \cdots & \cdots \\
    \cdots & \ddots & \ddots & \cdots & \cdots & \cdots \\
    K[2] & \cdots &\cdots & \cdots & K[0] & K[1] \\
	K[1] & K[2] & \cdots &\cdots  & K[1] & K[0]
    \end{pmatrix}
\end{equation*}
when $n$ is even and
\begin{equation*}
    \begin{pmatrix}
	K[0] & \cdots & K[\hat{n}-1] & K[\hat{n}-1] & \cdots & K[1] \\
	K[1] & K[0] & \cdots & K[\hat{n}-1] & \cdots & K[2] \\
	\cdots & \ddots & \ddots & \cdots & \cdots & \cdots \\
	K[\hat{n}-1] & \cdots & \ddots & \ddots & \cdots & \cdots \\
    K[\hat{n}-1] & K[\hat{n}-1]&\cdots & \cdots & \cdots & \cdots \\
    K[\hat{n}-2] & \ddots & \ddots & \cdots & \cdots & \cdots \\
	\cdots & \ddots & \ddots & \cdots & \cdots & \cdots \\
	K[1] & \cdots & \cdots &\cdots  & K[1] & K[0]
    \end{pmatrix}
\end{equation*}
whenever $n$ is odd.

We note that $\Lnh \sim \Lnt$ as the conditions are satisfied:
\begin{itemize}
\item The matrices are uniformly bounded in the $l_{2}$ norm:
\begin{align*}
    \norm{ \Lnt }_{\infty} & \leq  2 \sum_{j =0}^{\hat{n} - 1} | K[j] | \leq 2 \sum_{j=0}^{\infty} | K[j] |,
\end{align*}
which is a finite constant.

\item Whenever $n$ is odd, the difference $\Lnh - \Lnh$ is
\begin{align*}
    \begin{pmatrix}
	0 & 0 & \cdots & K[\hat{n}-1] & \cdots & K[1] \\
	0 & 0 & 0 & \cdots & \ddots & \cdots \\
	\cdots & \ddots & \ddots & \ddots & \cdots & K[\hat{n}-1] \\
	K[\hat{n}-1] & 0 & \ddots & \ddots & \ddots & \cdots \\
	 \cdots & \ddots & 0 & \cdots & \cdots & 0 \\
	K[1] & \cdots & K[\hat{n}-1] & 0 & \cdots  & 0 
    \end{pmatrix}
\end{align*}
and its weak norm is equal to 
\begin{align*}
  \norm{ \Ln - \Lnh }^2 & = 2 \, \frac{1}{n} \sum_{j = 1}^{\hat{n} - 1} j |K[j]|^2 \\
  & \leq 2 \, \frac{1}{n} \sum_{j = 1}^{n} j |K[j]|^2
   \xrightarrow[n \to \infty]{} 0,
\end{align*}
 due to the fact that $\sum |K[j]|^2$ is convergent, and using Kronecker's lemma~\cite{ShiryBook}.
 \end{itemize}

Identical arguments hold whenever $n$ is even with the minor modification of replacing $\hat{n}-1$ with $\hat{n} - 2$. Therefore, by Property~\ref{th:asymeq} we conclude that $\Ln \sim \Lnt$. 

Being circulant, the eigenvalues of $\Lnt$ are known to be equal to
\begin{align*}
    \beta_k^{[n]} & = K[0] + \sum_{m = 1}^{\hat{n}-1} K[m] e^{- j 2 \pi \frac{k m}{n}} + \sum_{m = 1}^{\hat{n}-2} K[m] e^{- j 2 \pi \frac{k (n-m)}{n}} \\
    & = \sum_{m = -(\hat{n}-1) + 1}^{\hat{n}-1} K[m] e^{- j 2 \pi \frac{k}{n} m}, \quad k = 0, \cdots (n-1) \\
    & = \sum_{m = - \lfloor n/2 \rfloor + 1}^{\lfloor n/2 \rfloor} K[m] e^{- j 2 \pi \frac{k}{n} m}, \quad k = 0, \cdots (n-1)
\end{align*}
when $n$ is even. Whenever $n$ is odd
\begin{align*}
    \beta_k^{[n]} = \sum_{m = -\lfloor n/2 \rfloor}^{\lfloor n/2 \rfloor} K[m] e^{ - j 2 \pi \frac{k}{n} m}, \quad k = 0, \cdots (n-1).
\end{align*}

If one defines the function
\begin{equation*}
    S^{[n]} \dtf = \left\{ \begin{array}{ll} \displaystyle
    \sum_{m = - \lfloor n/2 \rfloor + 1}^{\lfloor n/2 \rfloor} K[m] e^{- j 2 \pi f m} & n \text{ even} \\ \displaystyle
     \sum_{m = -\lfloor n/2 \rfloor}^{\lfloor n/2 \rfloor} K[m] e^{- j 2 \pi f m} & n \text{ odd}
    \end{array} \right.
\end{equation*}
then one can clearly see that
\begin{align}
    \beta_k^{[n]}  & = S^{[n]} \dtf \bigg|_{f = (k/n)} \nonumber \\
    \& \quad S^{[n]} \dtf & \xrightarrow[n \to \infty]{} S \dtf
    \label{eq:limitSpect}
\end{align}
where $S \dtf$ is the DTFT of the $K[\cdot]$, i.e. the spectrum of $\{ X[\cdot] \}$. Note that the convergence in~\eqref{eq:limitSpect} is uniform due to the fact that $K[\cdot]$ is absolutely summable.

\begin{lemma}
  Let $g(\cdot)$ be a continuous real-valued function. Then
  \begin{equation}
    \lim_{n \to \infty} \frac{1}{n} \sum_{k = 0}^{n-1} g \left( \beta_k^{[n]} \right)
    = \int_0^{1} g \left( S \dtf \right) df.
  \end{equation}
\label{le:avg}
\end{lemma}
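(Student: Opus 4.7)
The plan is to notice that the circulant eigenvalues are exactly samples of the trigonometric polynomial $S^{[n]}\dtf$ on the uniform grid, $\beta_k^{[n]} = S^{[n]}\bigl(e^{j 2\pi k/n}\bigr)$, so that the Ces\`aro-type average $\frac{1}{n}\sum_{k=0}^{n-1} g(\beta_k^{[n]})$ is essentially a Riemann sum of $g \circ S^{[n]}\dtf$. The argument then has two ingredients: standard Riemann-sum convergence applied to the continuous limiting function $g \circ S\dtf$, and a perturbation estimate that swaps $S^{[n]}$ for $S$ using the uniform convergence established just above equation~\eqref{eq:limitSpect}.

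Concretely, I would first set $M = 2\sum_{m=0}^{\infty}|K[m]|$, which is finite by absolute summability of $K[\cdot]$ and dominates both $|S\dtf|$ and $|S^{[n]}\dtf|$ uniformly in $n$ and $f$. This confines every $\beta_k^{[n]}$ and every value $S\dtf$ to the compact interval $[-M,M]$, on which the continuous function $g$ is uniformly continuous. I would then split
\[
\left| \frac{1}{n}\sum_{k=0}^{n-1} g(\beta_k^{[n]}) - \int_0^1 g\bigl(S\dtf\bigr)\, df \right| \leq T_1^{[n]} + T_2^{[n]},
\]
where $T_1^{[n]} = \frac{1}{n}\sum_{k=0}^{n-1}\bigl| g(\beta_k^{[n]}) - g\bigl(S(e^{j 2\pi k/n})\bigr) \bigr|$ and $T_2^{[n]} = \bigl| \frac{1}{n}\sum_{k=0}^{n-1} g\bigl(S(e^{j 2\pi k/n})\bigr) - \int_0^1 g\bigl(S\dtf\bigr)\, df \bigr|$.

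For $T_1^{[n]}$ I would use uniform continuity of $g$ on $[-M,M]$ to pick, given $\xi>0$, an $\eta$ such that $|u-v|<\eta$ implies $|g(u)-g(v)|<\xi$, and then appeal to the uniform convergence $\|S^{[n]} - S\|_\infty \to 0$ noted in~\eqref{eq:limitSpect} to force every summand to be smaller than $\xi$ for $n$ large enough. For $T_2^{[n]}$ I would simply observe that $f \mapsto g\bigl(S\dtf\bigr)$ is continuous on the compact interval $[0,1]$, hence Riemann integrable, and that its uniform Riemann sums on the partition $\{k/n\}$ converge to the integral.

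The only mildly delicate point is lining up the two limits so that they can be taken independently: the uniform bound $[-M,M]$ on the arguments of $g$ is what makes this clean, since it decouples the choice of $\eta$ in uniform continuity from the choice of $n$ in the uniform convergence of $S^{[n]}$. Once that is in place the proof reduces to two classical facts, Riemann-sum convergence and an \emph{$\epsilon$--$\delta$} continuity estimate, glued together by a triangle inequality.
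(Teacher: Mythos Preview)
Your proposal is correct and follows essentially the same route as the paper: the same triangle-inequality split into a Riemann-sum term and a perturbation term, with the latter handled via the uniform convergence $S^{[n]}\to S$ and (uniform) continuity of $g$. You are slightly more explicit about confining all arguments to a compact interval $[-M,M]$ to justify uniform continuity, which is a welcome clarification but not a departure in method.
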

\begin{proof}
Let us consider the difference,
\begin{align*}
    & \left| \int_0^{1} g \left( S \dtf \right) df - \frac{1}{n} \sum_{k = 0}^{n-1} g \left( \beta_k^{[n]} \right) \right| \\
    \leq \, & \left| \int_0^{1} g \left( S \dtf \right) df - \frac{1}{n} \sum_{k = 0}^{n-1} g \left( S \left( e^{j 2 \pi (k/n)} \right) \right) \right| \\
    & + \left| \frac{1}{n} \sum_{k = 0}^{n-1} g \left( S \left( e^{j 2 \pi (k/n)} \right) \right) - \frac{1}{n} \sum_{k = 0}^{n-1} g \left( \beta_k^{[n]} \right) \right|.
\end{align*}

By the simple property of Riemann integrals,
\begin{equation*}
    \frac{1}{n} \sum_{k = 0}^{n-1} g \left( S \left( e^{j 2 \pi (k/n)} \right) \right) \xrightarrow[n \to \infty]{} \int_0^{1} g \left( S \dtf \right) df
\end{equation*}
and therefore, for any $\epsilon > 0$, there exists an $n_{0}$ such that for any $n \geq n_{0}$, the first term is less than $\epsilon/2$.

When it comes to the second term, since $S^{[n]} \dtf$ converges uniformly to $S \dtf$ and since $g(\cdot)$ is continuous,  for any $\epsilon > 0$, there exists an $n_{1}$ such that for any $n \geq n_{1}$, 
\begin{align*}
    \left| g \left( S^{[n]} \left( e^{j 2 \pi f} \right) \right) - g \left( S \left( e^{j 2 \pi f} \right) \right) \right| \leq \epsilon/2,
\end{align*}
and therefore, whenever $n \geq n_{1}$,
\begin{multline*}
    \left| \frac{1}{n} \sum_{k = 0}^{n-1} g \left( S \left( e^{j 2 \pi (k/n)} \right) \right) - \frac{1}{n} \sum_{k = 0}^{n-1} g \left( \beta_k^{[n]} \right) \right| \\
    \leq \, \frac{1}{n} \sum_{k = 0}^{n-1} \left| g \left( S \left( e^{j 2 \pi (k/n)} \right) \right) - g \left( \beta_k^{[n]} \right) \right|
    \leq \, \epsilon/2.
\end{multline*}

In conclusion, for any $\epsilon > 0$, there exists an $n_o = \max \{ n_0, n_1 \}$ such that for any $n \geq n_o$,
\begin{align*}
    \left| \int_0^{1} g \left( S \dtf \right) df - \frac{1}{n} \sum_{k = 0}^{n-1} g \left( \beta_k^{[n]} \right) \right| \leq \epsilon.
\end{align*}
\end{proof}

Combining the results of Lemma~\ref{le:avg} and Theorem~\ref{th:contfct} yields the following 
\begin{theorem}
\label{th:CovMatEV}
Let $\{ X[\cdot] \}$ be a DT wide-sense stationary process with absolutely-summable auto-covariance function $K[\cdot]$ and a positive spectrum $S \dtf$ and let $\left\{ \alpha^{[n]}_k \right\}_k$ be the eigenvalues of the auto-covariance matrix of the $(X[m], \cdots, X[m+n])$ for some $m \in \Integers$. 

For any arbitrary continuous function $F(x)$ on $(0,\infty)$,
\begin{equation*}
  \lim_{n \to \infty} \frac{1}{n} \sum_{k=0}^{n-1} F\left(\alpha^{[n]}_{k} \right) = \int_0^1 F \left( S \dtf \right) df,
\end{equation*}
whenever the integral exists.

\end{theorem}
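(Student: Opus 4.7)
The plan is to chain the three ingredients already assembled in Appendix~\ref{app:FctEigCov}: the asymptotic equivalence $\Lambda^{[n]} \sim \tilde{\Lambda}^{[n]}$ of Subsection~\ref{appCovM}, Gray's transfer principle in Theorem~\ref{th:contfct}, and the Riemann-sum identification in Lemma~\ref{le:avg}. By wide-sense stationarity the covariance matrix of $(X[m],\ldots,X[m+n-1])$ does not depend on $m$, so it suffices to handle the case $m=1$, which is the $\Lambda^{[n]}$ treated in the appendix.

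The only preparatory work is to confirm the boundedness hypothesis of Theorem~\ref{th:contfct}. Because $K[\cdot]$ is absolutely summable, $S(e^{j2\pi f})$ is continuous on $[0,1]$, and by assumption it is strictly positive, so there exist constants $0 < c \leq C < \infty$ with $c \leq S(e^{j2\pi f}) \leq C$ on $[0,1]$. A standard Rayleigh-quotient argument using the Toeplitz structure of $\Lambda^{[n]}$ shows that every eigenvalue $\alpha_k^{[n]}$ lies in $[c,C]$. The eigenvalues of $\tilde{\Lambda}^{[n]}$ are the samples $\beta_k^{[n]} = S^{[n]}(e^{j2\pi k/n})$, and the uniform convergence $S^{[n]} \to S$ established in Subsection~\ref{appCovM} guarantees that for $n$ large enough they lie in a slightly enlarged closed interval $[c',C'] \subset (0,\infty)$. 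On such a closed interval, $F$, being continuous on $(0,\infty)$, is in particular continuous, so the hypotheses of Theorem~\ref{th:contfct} are met on $[c',C']$.

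Applying Theorem~\ref{th:contfct} to the two asymptotically equivalent sequences then yields
\begin{equation*}
\lim_{n \to \infty} \frac{1}{n} \sum_{k=0}^{n-1} F\!\left(\alpha_k^{[n]}\right) = \lim_{n \to \infty} \frac{1}{n} \sum_{k=0}^{n-1} F\!\left(\beta_k^{[n]}\right)
\end{equation*}
whenever either limit exists. Lemma~\ref{le:avg} with $g = F$ identifies the right-hand side as $\int_0^1 F\bigl(S(e^{j2\pi f})\bigr)\, df$, which is assumed to exist, and the theorem follows by composing the two equalities.

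I expect no real obstacle: the heavy lifting was already carried out in Subsection~\ref{appCovM} (establishing $\Lambda^{[n]} \sim \tilde{\Lambda}^{[n]}$ via the intermediate truncated $\hat{\Lambda}^{[n]}$ and Kronecker's lemma) and in Lemma~\ref{le:avg} (a Riemann-sum argument made rigorous by the uniform convergence $S^{[n]} \to S$). The only subtlety is arranging for $F$ to be evaluated on a closed subinterval of $(0,\infty)$ so that its mere continuity on $(0,\infty)$ is enough to invoke Gray's theorem; this is precisely where the positive lower bound and continuity of $S$ enter, and why these conditions are demanded in the hypotheses of Theorem~\ref{Gaussian-Gaussian}.
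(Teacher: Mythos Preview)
Your proposal is correct and follows exactly the paper's own route: the paper simply states that Theorem~\ref{th:CovMatEV} results from ``combining the results of Lemma~\ref{le:avg} and Theorem~\ref{th:contfct}'', and you spell out precisely this combination, chaining the asymptotic equivalence $\Lambda^{[n]}\sim\tilde\Lambda^{[n]}$ from Subsection~\ref{appCovM} with Gray's transfer principle and the Riemann-sum identification. Your additional care in verifying the uniform eigenvalue bounds needed to invoke Theorem~\ref{th:contfct} on a closed subinterval of $(0,\infty)$ is a detail the paper leaves implicit.
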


\section{Asymptotic Behavior of Relative Entropy}
\label{appAsymRE}

We establish below the asymptotic behavior of the relative entropy between two Gaussian vectors. It is assumed that these $n$-dimensional vectors are parts of discrete-time (wide-sense)
stationary processes with auto-covariance functions that are square-summable.

Without loss of generality, we assume that the two laws at hand are zero-mean.

\subsection{Relative Entropy between Correlated and Independent Gaussians}
\label{appD}

Assume in what follows that
\begin{equation*}
  \pn = \Normalf{\vx}{\vzero}{\mat{K}} \quad \& \quad \qn = \Normalf{\vx}{\vzero}{\mat{I}},
\end{equation*}
where $\pn$ is the law of an $n$-dimensional "segment" of a wide-sense stationary Gaussian process with absolutely summable auto-covariance function and non-zero spectrum $S \dtf$. The law $\qn$ corresponds to that of an $n$-dimensional "segment" of a wide-sense stationary white Gaussian process with a unit spectrum.
 
Denoting by $\left\{ \alpha^{[n]}_k \right\}_k$ the eigenvalues of $\mat{K}$, it is well-known~\cite{duchi2016DerivationsFL} that the relative entropy is
\begin{align*}
 \KL{\pn}{\qn} & = \frac{1}{2} \text{tr}(\mat{K}) - \frac{1}{2} \log \text{det}(\mat{K}) - \frac{n}{2} \\
  & = \frac{1}{2} \sum_{k=0}^{n-1} \alpha^{[n]}_k - \frac{1}{2} \sum_{k=0}^{n-1} \log \alpha^{[n]}_k  - \frac{n}{2}
\end{align*}
Using Theorem~\ref{th:CovMatEV},
\begin{multline*}
\frac{1}{n} \KL{\pn}{\qn}  = \frac{1}{2} \frac{1}{n} \sum_{k=0}^{n-1} \alpha^{[n]}_k - \frac{1}{2} \frac{1}{n} \sum_{k=0}^{n-1} \log \alpha^{[n]}_k  - \frac{1}{2} \\
 \xrightarrow[n \to \infty]{} \frac{1}{2} \!\int_0^1 \! S \dtf  df - \frac{1}{2} \int_0^1 \! \log S \dtf  df - \frac{1}{2}.
\end{multline*}

Finally, note that if the two hypotheses have identical power, $\int S \dtf = 1$ and the limit of relative entropy simplifies to
\begin{equation*}
\frac{1}{n} \KL{\pn}{\qn} \xrightarrow[n \to \infty]{} \frac{-1}{2} \int_0^1 \log S \dtf \! df,
\end{equation*}
which is by Jensen's strictly positive and equal to zero if and only if $S \dtf = 1$.

\subsection{Relative Entropy between Correlated Gaussians}
\label{appDC}

Now we consider
\begin{equation*}
  \pn = \Normalf{\vx}{\vzero}{\Lambda^{[n]}_p} \quad \& \quad \qn = \Normalf{\vx}{\vzero}{\Lambda^{[n]}_q},
\end{equation*}
where $\pn$ and $\qn$ are the laws of an $n$-dimensional "segment" of two wide-sense stationary Gaussian processes with absolutely summable auto-covariance functions and bounded non-zero spectra $S_p \dtf$ and $S_q \dtf$ respectively.
The relative entropy may be readily derived:
\begin{align*}
& \KL{\pn}{\qn} \\
& \, = \,\frac{1}{2} \text{tr} \left( \Lambda^{[n]}_p \Lambda^{-1 \, [n]}_q \right) - \frac{1}{2} \log \frac{ \text{det}\left(\Lambda^{[n]}_p \right) } {\text{det}\left(\Lambda^{[n]}_q \right) } - \frac{n}{2} \\
% & \, = \, \frac{1}{2} \text{tr} \left( \Lambda^{1/2 [n]}_p \Lambda^{-1 \, [n]}_q \Lambda^{1/2 [n]}_p \right) - \frac{1}{2} \log \frac{ \text{det}\left(\Lambda^{[n]}_p \right) } {\text{det}\left(\Lambda^{[n]}_q \right) } - \frac{n}{2}.
\end{align*}

The asymptotics of the last two terms may be readily obtained as in the previous section. It remains to determine those of the first term.

Since $\Lambda^{[n]}_p \sim \tilde{\Lambda}^{[n]}_p$ and $\Lambda^{[n]}_q \sim \tilde{\Lambda}^{[n]}_q$, then
\begin{align}
    \Lambda^{-1 \, [n]}_q & \sim \tilde{\Lambda}^{-1 \, [n]}_q \nonumber \\
    \& \quad \Lambda^{[n]}_p \Lambda^{-1 \, [n]}_q & \sim \tilde{\Lambda}^{[n]}_p \tilde{\Lambda}^{-1 \, [n]}_q
    \label{eq:CovMatProduct}
\end{align}

Since all circulant Hermitian matrices can be diagonalized with the same orthogonal matrix, the eigenvalues of $\tilde{\Lambda}^{[n]}_p \tilde{\Lambda}^{-1 \, [n]}_q$ are simply the ratios of those of $\tilde{\Lambda}^{[n]}_p$ to that of $\tilde{\Lambda}^{[n]}_q$:
\begin{equation*}
    \frac{\beta^{[n]}_k}{\gamma^{[n]}_k} = 
    \frac{S_p^{[n]} \dtf}{S_q^{[n]} \dtf } \Bigg|_{f = (k/n)}
\end{equation*}

On the other hand and as seen in Note~\ref{note:Eig}, the eigenvalues of $\Lambda^{[n]}_p \Lambda^{-1 \, [n]}_q$ are the same as those of the Hermitian $\Lambda^{1/2 [n]}_p \Lambda^{-1 \, [n]}_q \Lambda^{1/2 [n]}_p$ and are hence real and positive. Applying Theorem~\ref{th:contfct}, 
\begin{align*}
& \lim_{n \to \infty} \,  \frac{1}{n} \text{tr} \left( \Lambda^{[n]}_p \Lambda^{-1 \, [n]}_q \right) \\
& \, =  \lim_{n \to \infty} \,  \frac{1}{n} \text{tr} \left( \tilde{\Lambda}^{[n]}_p \tilde{\Lambda}^{-1 \, [n]}_q \right) = \lim_{n \to \infty} \, \frac{1}{n} \sum_{k=0}^{n-1}  \frac{S_p^{[n]} \left( e^{j 2 \pi \frac{k}{n}}\right)}{S_q^{[n]} \left( e^{j 2 \pi \frac{k}{n}}\right) },
\end{align*}
and following the same steps as in the proof of Lemma~\ref{le:avg}, since the spectra $S_p \dtf$ and $S_q \dtf$ are upper-bounded, lower-bounded by a positive scalar and converge uniformly, their ratio
\begin{equation*}
 \frac{S_p^{[n]} \dtf}{S_q^{[n]} \dtf } \xrightarrow[n \to \infty]{} \frac{S_p \dtf}{S_q \dtf } \quad \text{ uniformly,} 
\end{equation*}
and by basic properties of Riemann integrals,
\begin{align*}
\lim_{n \to \infty} \, \frac{1}{n} \sum_{k=0}^{n-1}  \frac{S_p^{[n]} \left( e^{j 2 \pi \frac{k}{n}}\right)}{S_q^{[n]} \left( e^{j 2 \pi \frac{k}{n}}\right) } = \int_0^1 \frac{S_p \dtf}{S_q \dtf } \, df.
\end{align*}

In summary,
\begin{align}
& \lim_{n \to \infty} \, \frac{1}{n} \KL{\pn}{\qn} \nonumber \\
& \, = \frac{1}{2} \, \lim_{n \to \infty} \frac{1}{n} \text{tr} \left( \Lambda^{[n]}_p \Lambda^{-1 \, [n]}_q \right) - \frac{1}{2} \, \lim_{n \to \infty} \frac{1}{n} \log \frac{ \text{det}\left(\Lambda^{[n]}_p \right) } {\text{det}\left(\Lambda^{[n]}_q \right) } - \frac{1}{2} \nonumber \\
& \, = \frac{1}{2} \int_0^1 \frac{S_p \dtf}{S_q \dtf } \, df - \frac{1}{2} \int_0^1 \! \log \frac{S_p \dtf}{S_q \dtf} \, df - \frac{1}{2}.
\label{eq:KLGeneral}
\end{align}
 
%%%%%%%%%%%%%%%%%%%%%%%%%%%%%%%%%%%%%%%%%%%%%%%%%%%%%%%%%%%%%%%%%%%%%%

\section{Sublinear Growth of $D(\pn||\qn)$}

Consider the following setup (given in~\cite{fahs2024testing}):
Given $n \in \mathbb{N}$, $n \geq 3$, consider the following PDF over $[0,1]^n$:
\begin{align*} 
    \qn (x^{[n]}) = \begin{cases}
    n - \sqrt{n}, & \text{if } \| x^n \|_\infty \leq n^{-1/n}, \vspace{5pt} \\
    \frac{\sqrt{n}}{n-1}, & \text{otherwise},
    \end{cases}
\end{align*}
where $\|x^n \|_\infty = \max_i |x_i|$. It can be easily verified that $\qn$ is a valid PDF as the volume of the set $\{x^{[n]}: \| x^{[n]} \|_\infty \leq n^{-1/n} \}$ is $\frac{1}{n}$. Let $\pn$ be the uniform law over $[0,1]^n$. 

Now, consider the following hypothesis testing problem:
\begin{eqnarray*}
     \left\{ \begin{array}{ll}
        \displaystyle  H_0:  X^n \sim \pn(\cdot), \vspace{5pt} \\ 
        \displaystyle  H_1: X^n \sim \qn(\cdot).
      \end{array} \right.
  \end{eqnarray*}

Consequently,
\begin{align}
    \KL{\pn}{\qn} & = \int \pn \log \frac{\pn}{\qn} dx^{[n]} \\
    & = \frac{1}{n} \log \frac{1}{n-\sqrt{n}} + \left(1-\frac{1}{n}\right) \log \frac{n-1}{\sqrt{n}} \\
    & = \log \frac{n-1}{\sqrt{n}} + \frac{1}{n} \log \frac{\sqrt{n}}{(n-1)(n-\sqrt{n})}, \\
    & \stackrel{\text{(a)}} \sim \log \sqrt{n},
\end{align}
where (a) follows from elementary steps. As such, $\KL{\pn}{\qn}$ grows with $n$ as required in Theorem~\ref{thm:GCSL}.

\begin{lemma}
    Consider the $\delta$-typical set  $\KLTset$. For any constant family $\epsilon$, any constant family $\delta$ is $\epsilon$-good. In particular, the conditions of Theorem~\ref{thm:GCSL} hold and, for any $\tau > 0$,
\begin{align*}
    -\log \beta^{[n]}_\tau = \Dn + o \left( \Dn \right).
\end{align*}
\end{lemma}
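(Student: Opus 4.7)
The plan is to exploit the very simple structure of the log--likelihood ratio in this example. Since $\qn$ takes only two values, the random variable
$\log\bigl(\pn(X^{[n]})/\qn(X^{[n]})\bigr)$ under $\pn$ is a two--point variable: it equals $-\log(n-\sqrt{n})$ on the "inner box" $A_n = \{x^{[n]} : \|x^{[n]}\|_\infty \leq n^{-1/n}\}$ (probability $1/n$ under the uniform law $\pn$) and equals $\log\bigl((n-1)/\sqrt{n}\bigr)$ on the complement (probability $1-1/n$).

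The next step is a short algebraic check that the "outer" value $\log\bigl((n-1)/\sqrt n\bigr)$ is asymptotically indistinguishable from $\Dn$. Writing
\[
\Dn \,=\, \log\tfrac{n-1}{\sqrt n} \,-\, \tfrac{1}{n}\log\tfrac{(n-1)(n-\sqrt n)}{\sqrt n},
\]
one sees that the gap satisfies
\[
\left|\log\tfrac{n-1}{\sqrt n}-\Dn\right| \,=\, \tfrac{1}{n}\,\log\tfrac{(n-1)(n-\sqrt n)}{\sqrt n} \,=\, O\!\left(\tfrac{\log n}{n}\right)\,\xrightarrow[n\to\infty]{}\,0.
\]
Therefore, for any constant family $\delta$ with $\deltan=\delta>0$ and all $n$ large enough, every $x^{[n]}\in A_n^c$ lies in $\KLTset$. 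Consequently,
\[
\pn\!\left(\KLTset\right)\,\geq\,\pn(A_n^c)\,=\,1-\tfrac{1}{n},
\]
which exceeds $1-\epsilonn$ for any constant $\epsilon$ once $n$ is large. Thus any constant $\delta$ is $\epsilon$-good for every constant $\epsilon$, in particular for $\epsilon=\tau$.

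Finally, I invoke Theorem~\ref{thm:GCSL}. Part (a): $\Dn\sim \tfrac{1}{2}\log n\to\infty$ grows with $n$ by the asymptotic computed above. Part (b): given $\xi>0$, the constant $\tau$-good family $\deltan\equiv 1$ satisfies $\deltan=1<\xi\Dn$ for $n$ large enough. Both hypotheses of Theorem~\ref{thm:GCSL} are met, and the theorem delivers $-\log\beta^{[n]}_\tau = \Dn + o(\Dn)$.

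The entire argument is essentially book-keeping; the only "obstacle"—really, the key observation—is noticing that $\Dn$ is dominated by the probability-$(1-1/n)$ outcome of the log-ratio, so the deviation of the log-ratio from $\Dn$ on that event is of order $(\log n)/n$, which is eventually smaller than any fixed $\delta$. This makes the typicality condition trivially satisfied under $\pn$ without any need for laws of large numbers or central-limit-theorem style arguments.
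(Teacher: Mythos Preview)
Your proof is correct and follows essentially the same approach as the paper's: you identify the ``outer'' region $A_n^c$ (the paper calls it $B$), observe that on this set the log-likelihood ratio equals $\log\tfrac{n-1}{\sqrt n}$, compute that its deviation from $\Dn$ is $O((\log n)/n)\to 0$, and conclude $A_n^c\subseteq\KLTset$ with $\pn(A_n^c)=1-1/n$. Your explicit verification of both hypotheses of Theorem~\ref{thm:GCSL} at the end is a nice touch that the paper leaves implicit.
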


\begin{proof}
    Let $B = \{ x^{[n]}: \|x^{[n]} \|_\infty > n^{-1/n} \}$. Then, for $x^{[n]} \in B$, 
    \begin{align}
        \log \frac{\pn}{\qn} dx^{[n]} = \log \frac{n-1}{\sqrt{n}}.
    \end{align}

Hence,
\begin{align}
    \log \frac{\pn}{\qn} dx^{[n]} - \KL{\pn}{\qn} & = \frac{1}{n} \log \frac{\sqrt{n}}{(n-1)(n-\sqrt{n})} \\
    & \xrightarrow[n \to \infty]{} 0.
\end{align}
Therefore, for any constant $\delta$, $B \subseteq \KLTset$. Finally, $\pn(\KLTset) \geq \pn (B)  = 1- 1/n \xrightarrow{n \to \infty} 1.$
\end{proof}
\end{document}